\newcommand\comment[1]{}
\newcommand{\prob}[2]{P_{#1}(#2)}
\newcommand{\probblank}[1]{P(#1)}
\newcommand{\probbig}[2]{P_{#1}\big(#2\big)}
\newcommand{\absbig}[1]{\big | #1 \big |}
\newcommand\smallgiven{\:\vert\:}
\newtheorem{Theorem}{Theorem}
\newtheorem{Lemma}{Lemma}
\newtheorem*{Open Question*}{Open Question}
\newtheorem{Definition}{Definition}
\newtheorem{game}{Game}
\DeclarePairedDelimiter{\ceil}{\lceil}{\rceil}
\newcommand{\Prop}{\mathcal{P}}
\DeclareMathOperator{\poly}{poly}
\title{Can graph properties have exponential quantum speedup?}
\author{Andrew M.\ Childs\thanks{Department of Computer Science, Institute for Advanced Computer Studies, and Joint Center for Quantum Information and Computer Science, University of Maryland} \and Daochen Wang\thanks{Applied Mathematics, Statistics, and Scientific Computation, and Joint Center for Quantum Information and Computer Science, University of Maryland}}
\date{}
\begin{document}
\maketitle
\begin{abstract}
Quantum computers can sometimes exponentially outperform classical ones, but only for problems with sufficient structure. While it is well known that query problems with full permutation symmetry can have at most polynomial quantum speedup---even for partial functions---it is unclear how far this condition must be relaxed to enable exponential speedup. In particular, it is natural to ask whether exponential speedup is possible for (partial) \emph{graph properties}, in which the input describes a graph and the output can only depend on its isomorphism class.

We show that the answer to this question depends strongly on the input model. In the adjacency matrix model, we prove that the bounded-error randomized query complexity $R$ of any graph property $\mathcal{P}$ has $R(\mathcal{P}) = O(Q(\mathcal{P})^{6})$, where $Q$ is the bounded-error quantum query complexity. This negatively resolves an open question of Montanaro and de Wolf in the adjacency matrix model. More generally, we prove $R(\mathcal{P}) = O(Q(\mathcal{P})^{3l})$ for any $l$-uniform hypergraph property $\mathcal{P}$ in the adjacency matrix model. 
In direct contrast, in the adjacency list model for bounded-degree graphs, we exhibit a promise problem that shows an exponential separation between the randomized and quantum query complexities.
\end{abstract}

\section{Introduction}

Quantum computers offer the prospect of solving certain problems exponentially faster than is possible classically. The first concrete hint of this possibility came from the model of query complexity, where the input is provided by a black box and the computational cost is quantified as the number of queries to that box. In this model, there is an exponential quantum speedup between deterministic classical and quantum computation \cite{deutsch_jozsa_1992}, and even between bounded-error classical and quantum computation \cite{simon_algorithm_94}. Indeed, these algorithms directly motivated Shor's algorithm for factoring, which replaces the black box with an efficiently computable function to give an (apparent) exponential speedup for an explicit problem \cite{Sho97}.

Since query complexity provides a useful testbed for understanding the potential power of quantum computers, it is natural to explore which problems allow for quantum speedup in this model. In the negative direction, it has been known for over twenty years that for total functions---i.e., query problems that are defined for any possible input string---quantum computers can offer at most a polynomial advantage \cite{beals_buhrman_cleve_mosca_dewolf_polynomial_1998}. More precisely, suppose the goal is to compute some known function
\begin{equation}
    \mathcal{P}\colon S \rightarrow\{0,1\}
\end{equation}
on a black-box input $x \in S \subseteq \Sigma^m$, where $\Sigma$ is a finite set (the \emph{input alphabet}). The domain $S$ of $\mathcal{P}$ is referred to as the \emph{promise} on the input. When $S = \Sigma^m$, we say $\mathcal{P}$ is \emph{total}; otherwise we say it is \emph{partial}. The deterministic, randomized, and quantum query complexities of $\mathcal{P}$ are denoted $D(\mathcal{P}) \ge R(\mathcal{P}) \ge Q(\mathcal{P})$, respectively. Beals et al.\ show that $D(\mathcal{P}) = O(Q(\mathcal{P})^6)$ \cite{beals_buhrman_cleve_mosca_dewolf_polynomial_1998}.

This result establishes that a promise is necessary to achieve superpolynomial quantum speedup. Thus it is natural to ask what kinds of promises can and cannot allow for a significant quantum advantage. In particular, if a query problem is highly symmetric, then superpolynomial quantum speedup remains impossible, even if the problem is partial. Specifically, Aaronson and Ambainis show that if $\Sigma = \{0,1\}$, $S$ is closed under permutations of the input bits, and $\mathcal{P}$ is invariant under those permutations, then $R(\mathcal{P}) = O(Q(\mathcal{P})^2)$ \cite{aaronson_ambainis_structure_14}*{Appendix}. Subsequently, Chailloux showed that $R(\mathcal{P}) = O(Q(\mathcal{P})^3)$ for any $\Sigma$ \cite{chailloux_symmetric_18}. In fact, Ben-David shows that $R(\mathcal{P}) = O(Q(\mathcal{P})^{18})$ even if we only know that $S$ is closed under input permutations \cite{bendavid_structure_promises_16}.

On the other hand, with less than full permutation symmetry, it is unclear when quantum speedups are possible. A natural class of problems with significant symmetry, though much less than full permutation symmetry, is the class of \emph{graph properties}. For such problems, the input describes a graph, and the output depends only on the isomorphism class of that graph. Thus the vertices can be permuted arbitrarily, but such a permutation induces a structured permutation on the edges, about which queries provide information. A graph property can be partial, i.e., there can be a promise that the input graph comes from a restricted family of (isomorphism classes of) graphs. In particular, partial graph properties arise in the setting of \emph{graph property testing}, where the goal is to determine whether a given graph either has a certain property or is far from having that property.

Ambainis, Childs, and Liu studied quantum algorithms for graph property testing, giving polynomial quantum speedups for testing expansion and bipartiteness of bounded-degree graphs in the adjacency list model \cite{ambainis_childs_liu_property_testing_2011}. Furthermore, they showed that at most a polynomial advantage is possible for testing expansion, and asked whether an exponential speedup is ever possible for graph property testing. Montanaro and de Wolf raised this question in a way that can be construed more generally, asking the following:

\begin{Open Question*} {\cite{montanaro_dewolf_property_16}}
Is there any graph property $\mathcal{P}$ which admits an exponential quantum speed-up?
\end{Open Question*}

This question takes different forms depending on the model of access to the graph. Indeed, we show that its answer depends strongly on which of two common input models (formalized in Section~\ref{sec:prelim}) is used: the \emph{adjacency matrix model} (in which the algorithm inputs a pair of vertices and the black box indicates whether they are adjacent) or the \emph{adjacency list model} (in which the algorithm inputs a vertex and the black box returns its neighbours).

When the graph is specified in the adjacency matrix model, we prove in Section~\ref{sec:adjmat} that exponential quantum speedup is impossible. In fact, we prove this not just for graph property testing, but for any partial graph property. Furthermore, we prove this for all $l$-uniform graph properties provided $l$ is constant. Our proof is based on the framework of Chailloux~\cite{chailloux_symmetric_18}, which essentially exploits results of Zhandry~\cites{zhandry_how_construct_quantum_random_12, zhandry_note_collision_set_equality_15}.

In direct contrast, we show that exponential quantum speedup is possible for deciding graph properties in the adjacency list model. Specifically, in Section~\ref{sec:adjlist} we design a promise such that the property $\mathcal{P}_5$ of having a vertex of degree $5$ exhibits an exponential separation between $R(\mathcal{P}_5)$ and $Q(\mathcal{P}_5)$. Our example is based on the glued-trees problem of Childs eta al.\ \cite{childs_2003}.

We conclude in Section~\ref{sec:discussion} with a brief discussion of some open problems.

\section{Preliminaries}\label{sec:prelim}
Throughout the paper, we let $n,d, l\in\mathbb{Z}_{\geq 1}$, $1\leq l\leq n$, and $M\coloneqq \binom{n}{l}$. For $k\in\mathbb{Z}_{\geq1}$, we let $[k] \coloneqq \{1,\dots, k\}$.

A graph property $\mathcal{P}$ is a function from a set of graphs to $\{0,1\}$ that is invariant under graph isomorphisms. For example, ``has a triangle'' is a graph property as two isomorphic graphs either both have a triangle or neither has a triangle.

There are two commonly used models to specify a graph: the adjacency matrix and adjacency list models, introduced below. 

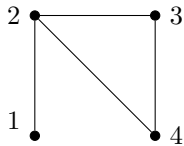
\begin{figure}[ht]\centering
\begin{tikzpicture}[scale=0.8]
      \tikzset{enclosed/.style={draw,circle,inner sep=1pt, minimum size=3.5pt,fill=black}}

      \node[enclosed, label={left, yshift=.2cm: 1}] (1) at (-1,0) {};
      
      \node[enclosed, label={left, yshift=-.0cm: 2}] (2) at (-1, 2) {};
      
      \node[enclosed, label={right, yshift=0cm: 3}] (3) at (1,2) {
      };
      
      \node[enclosed, label={right, yshift=0cm: 4}] (4) at (1,0) {
      };

      \draw (1) -- (2) node[midway, left] (edge1) {};
      
      \draw (2) -- (4) node[midway, above] (edge2) {};
      
      \draw (2) -- (3) node[midway, above] (edge3) {};
      
      \draw (3) -- (4) node[midway, below] (edge4) {};
\end{tikzpicture}
\caption{A graph on $4$ vertices.}
\label{fig:example_graph}
\end{figure}

\subsection{Adjacency matrix model}
An $l$-uniform hypergraph $x$ with vertices $[n]$ is a set $E\subset E_{l}$, where
\begin{equation}\label{def:hypergraph_edges}
     E_l \coloneqq \{\{u_1, u_2, \dots, u_l\} \smallgiven u_i \in [n] \text{ all distinct} \}.
\end{equation}
is the set of all hyperedges. Note that a graph is a $2$-uniform hypergraph and that $|E_l|= \binom{n}{l} = M$.

In the adjacency matrix model, we first fix an identification of $E_l$ with $[M]$. Then we model $x$ by a $M$-bit string $x\in \{0,1\}^M$, or equivalently, function $x\in \{0,1\}^{[M]}$, that indicates the presence $(1)$ or absence $(0)$ of each hyperedge. For example, under the identification
\begin{equation}
    1 \leftrightarrow \{1,2\}, \ 2 \leftrightarrow \{1,3\}, \ 3 \leftrightarrow \{1,4\}, \ 4 \leftrightarrow \{2,3\}, \ 5 \leftrightarrow \{2,4\}, \ 6 \leftrightarrow \{3,4\},
\end{equation}
the graph in Fig.~\ref{fig:example_graph} is specified by $x = 100111$. 

Now, each permutation $\pi \in S_{n}$ of $[n]$ naturally induces a permutation $\Pi \in S_{M}$ of hyperedges $[M]$ by
\begin{equation}\label{eq:graph_property_condition_equiv}
    \Pi(\{u_1, u_2, \dots, u_l\}) = \{\pi(u_1), \pi(u_2), \dots, \pi(u_l)\}.
\end{equation}
Then, saying $\mathcal{P}\colon S\subseteq\{0,1\}^M\rightarrow\{0,1\}$ is a hypergraph property means
\begin{equation}\label{eq:graph_property_invariance}
    x\in S \implies x\circ \Pi \in S \text{ and }
    \mathcal{P}(x) = \mathcal{P}(x \circ \Pi),
\end{equation}
for all induced permutations $\Pi$, where $\circ$ denotes composition of functions.

\subsection{Adjacency list model}
The adjacency list model is an alternative model for specifying a graph, which is particularly well-suited for graphs of bounded-degree. In this model, a graph $x$ on vertices $[n]$ of maximum degree $d$ is modelled by a function
$x\colon [n]\times [d] \rightarrow [n]\cup\{*\}$
with
\begin{equation}
    (u,i) \mapsto \begin{dcases*} v \in [n] & \text{ if} $v$ \text{is the} $i$\text{th neighbour of }$u$, \\ 
    * & \text{ if} $u$ {has fewer than} $i$ neighbours. \end{dcases*}
\end{equation}
Note that $x$ may be non-unique for a given graph due to the choice in the ordering of neighbours. For example, after identifying $x$ with an $n$-by-$d$ grid of entries in $[n]\cup\{*\}$, the graph in Fig.~\ref{fig:example_graph} can be modelled by
\begin{equation}
    x = \begin{bmatrix}
    2 & * & * \\ 
    1 & 3 & 4 \\
    4 & 2 & * \\
    2 & 3 & *
    \end{bmatrix}
\quad\text{or}\quad
    x = \begin{bmatrix}
    2 & * & * \\ 
    4 & 1 & 3 \\
    2 & 4 & * \\
    3 & 2 & *
    \end{bmatrix}
\end{equation}
among other possibilities.

\subsection{Query complexity}
We now briefly discuss query complexity and refer readers to \cite{dewolf_phd} for details. Let $\mathcal{P}$ be any function
\begin{equation}
    \mathcal{P}: S\subseteq \Sigma^m \rightarrow \{0,1\}.
\end{equation}

\begin{Definition}
For $\alpha \in[0.5, 1]$, we say an algorithm $\mathcal{A}$ (randomized or quantum) $\alpha$-estimates $\mathcal{P}$ if, for each $x\in S$, $\mathcal{A}$ outputs $\mathcal{P}(x)$ with probability at least $\alpha$. We say $\mathcal{A}$ estimates $\mathcal{P}$ if $\mathcal{A}$ $(\sfrac{2}{3})$-estimates $\mathcal{P}$. 
\end{Definition}

Then, the (classical) randomized query complexity of $\mathcal{P}$, $R(\mathcal{P})$, is the least $r\in \mathbb{Z}_{\geq 0}$ such that there exists a randomized decision tree that estimates $\mathcal{P}$ and queries at most $r\leq m$ positions of $x$. A randomized decision tree is a decision tree where each node either queries one position of $x$, or randomly draws from some probability distribution to decide the position to query next. 

The quantum query complexity of $\mathcal{P}$, $Q(\mathcal{P})$, is the least $q\in \mathbb{Z}_{\geq 0}$ such that there exists a quantum circuit which estimates $\mathcal{P}$ and queries an $x$-dependent oracle, called $O_x$, at most $q$ times. The oracle $O_x$ is a unitary operator on $\mathbb{C}^m\otimes \mathbb{C}^\Sigma$ defined on basis vectors by
\begin{equation}
    O_x: \ket{i}\ket{y} \mapsto \ket{i}\ket{y \oplus x(i)},
\end{equation}
for $i\in [m], y\in\Sigma$. A quantum algorithm that queries $O_x$ $q$ times means a quantum circuit with $q$ $O_x$ unitaries but any number of unitaries that do not depend on $x$. Such a quantum circuit can be generically written as
\begin{equation}\label{eq:generic_quantum_circuit}
    \mathsf{Q} = U_{q} \, O_x \, U_{q-1} \, \dots U_{1} \, O_x \, U_{0},
\end{equation}
where the unitaries $U_i$ act on space $\mathbb{C}^m \otimes \mathbb{C}^\Sigma \otimes \mathbb{C}^w$ with $\mathbb{C}^w$ being some extra ``work'' register. 

\section{No exponential speedup in the adjacency matrix model}\label{sec:adjmat}

Throughout this section, we fix a hypergraph property $\mathcal{P}$ on vertices $[n]$ and set $q \coloneqq Q(\mathcal{P})$. The aim of this section will be to prove the following.
\begin{Theorem}\label{result:adjacency_matrix}
For any $l$-uniform hypergraph property $\mathcal{P}$, we have
\begin{equation}
    R(\mathcal{P}) = O(Q(\mathcal{P})^{3l}).
\end{equation}
In particular, by setting $l=2$, we have $R(\mathcal{P}) = O(Q(\mathcal{P})^6)$ for graph properties.
\end{Theorem}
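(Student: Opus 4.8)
The plan is to follow the strategy of Chailloux \cite{chailloux_symmetric_18} for fully symmetric functions and adapt it to exploit only the weaker symmetry available for hypergraph properties. The starting point is the standard fact that $Q(\mathcal{P}) = q$ implies $\mathcal{P}$ can be computed by a quantum algorithm making $q$ queries, hence the acceptance probability $p(x)$ is a polynomial of degree at most $2q$ in the bits of $x\in\{0,1\}^M$ (this is the Beals et al.\ polynomial method). We want to construct an efficient randomized algorithm. The key idea in the symmetric setting is that if the input string has a suitable statistical structure (roughly, a fixed Hamming weight up to sampling), then a classical algorithm can estimate the relevant ``symmetrized'' quantity by random sampling, and the symmetry of $\mathcal{P}$ means this symmetrized quantity already determines $\mathcal{P}(x)$. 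For hypergraph properties we do not have full symmetry, but we do have invariance under the induced subgroup $\{\Pi : \pi \in S_n\} \le S_M$, which is large enough: the orbit of a hyperedge under this group is all of $E_l$, so the relevant symmetrization is still averaging over vertex relabelings.

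The steps I would carry out, in order. First, reduce to a structured subproblem: show that either $\mathcal{P}$ has small randomized query complexity for a trivial reason (e.g.\ the promise $S$ contains strings of very different Hamming weights so a constant-query sampling argument already separates the $0$- and $1$-inputs), or else all inputs in $S$ have Hamming weight in a narrow band, in which case we are in the ``balanced'' regime where the polynomial-method/sampling machinery applies. Second, invoke the Zhandry-based results (via Chailloux) that control how well a low-degree polynomial, or a low-query quantum algorithm, can distinguish a string from its image under a random relabeling: the core quantitative input is that a $q$-query quantum algorithm cannot distinguish two inputs that agree under a random permutation drawn from a sufficiently rich group, unless $q$ is large relative to the ``collision'' parameters of that group action. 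Third — and this is where the exponent $3l$ rather than $3$ enters — bound the orbit/collision structure of the $S_n$-action on $E_l \cong [M]$: a random $\pi\in S_n$ fixes a given hyperedge with probability roughly $(l!\,(n-l)!)/n! = 1/M$, and more generally the number of hyperedges ``moved a little'' is governed by $n = \Theta(M^{1/l})$ rather than by $M$ itself; this is the source of the loss, since the effective alphabet of ``vertex identities'' has size $n$, and $n^{\Theta(1)} = M^{\Theta(1/l)}$, and working in terms of $q$ (which a priori could be as small as $\mathrm{polylog}(M)$, but the bound must hold for all $q$) one pays an $l$-th power when translating edge-level statements to vertex-level statements. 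Fourth, assemble: run the classical sampling algorithm that estimates the symmetrized acceptance probability to constant accuracy, using $O(q^{3l})$ queries, and conclude by symmetry that this determines $\mathcal{P}(x)$; amplify to success probability $2/3$.

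I expect the main obstacle to be Step 3: correctly pushing the symmetry argument through the $S_n$-on-$E_l$ action rather than the $S_M$-on-$[M]$ action. In the fully symmetric case one uses that a random element of $S_M$ is "as mixing as possible" on $[M]$; here one must verify that the orbit of the $S_n$-action on pairs (or $l$-tuples) of hyperedges is still large enough — essentially, that the action on $E_l$ is "2-transitive enough" on the relevant scale — and quantify the deficiency. The cleanest route is probably to reduce a hypergraph property on $M$ bits to a string problem over the alphabet $\Sigma = [n]\cup\{\text{stuff}\}$ that is genuinely $S_n$-symmetric in Chailloux's sense (think of each input as a function recording, for each hyperedge, which vertices it contains together with its bit value), apply Chailloux's $R = O(Q^3)$ theorem to that $S_n$-symmetric problem over alphabet size $n$, and then pay the conversion cost between a query to the $M$-bit adjacency string and a query to this auxiliary $S_n$-symmetric encoding. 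Each query in the auxiliary model costs $O(\mathrm{poly})$ — but simulating it may require reading $l$ vertices, and more importantly the quantum query complexity in the auxiliary model could be larger than $q$ by up to a power of $l$ in the exponent, which is exactly where $Q^3 \mapsto Q^{3l}$ comes from. Making that reduction precise, and checking the promise $S$ transfers correctly, is the crux; everything else is routine amplification and the cited black boxes.
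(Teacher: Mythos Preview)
Your high-level instinct --- adapt Chailloux's argument, which rests on Zhandry's distribution machinery --- is right, but the concrete route you sketch has a genuine gap, and the paper's mechanism is different from what you propose.

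First, the ``balanced Hamming weight'' reduction in your Step~1 is not needed and does not appear; the argument works uniformly for all inputs in $S$. More importantly, your proposed endgame --- encode the hypergraph as an $S_n$-symmetric string over some alphabet $\Sigma = [n]\cup\{\text{stuff}\}$ and invoke Chailloux's $R=O(Q^3)$ as a black box --- does not go through as stated. Chailloux's theorem applies to functions on $\Sigma^{[m]}$ that are invariant under $S_m$ acting on the \emph{index set}. A hypergraph is indexed by $l$-subsets of $[n]$, and the relevant symmetry is $S_n$ acting on vertices, which induces a proper subgroup of $S_M$ on the index set; there is no natural re-indexing by $[n]$ itself that preserves both the query model and the symmetry. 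Your sentence ``think of each input as a function recording, for each hyperedge, which vertices it contains together with its bit value'' does not define such an encoding, and the claimed ``conversion cost'' is left entirely unspecified.

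What the paper actually does is re-run the Chailloux/Zhandry argument internally, with the distributions adapted to act on \emph{vertices}. Concretely: for each $r$, let $D_r$ be Zhandry's distribution on maps $f\colon[n]\to[n]$ with range of size at most $r$, and lift $f$ to a map $F$ on hyperedges by $F(\{u_1,\dots,u_l\})=\{f(u_1),\dots,f(u_l)\}$. The key technical step is to show that for any $k$ constraints $F(d_i)=e_i$, the probability under $D_r$ is a polynomial in $1/r$ of degree at most $kl-1$ --- this follows because each hyperedge constraint unpacks into at most $l$ vertex constraints on $f$, to which Zhandry's original lemma applies. Feeding this degree bound into Zhandry's indistinguishability theorem (error $\sim d^3/r$ for a $6q$-query circuit, so $d=O(ql)$) shows that $r=s=O(q^3)$ suffices to make the quantum circuit's behavior under $F\sim D_s$ close to its behavior under a true vertex permutation $F\sim D_\infty$, where correctness is guaranteed by the hypergraph-property invariance. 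The randomized algorithm then samples $F\sim D_s$ and queries every bit $x(i)$ for $i$ in the image of $F$; since $f$ has range of size at most $s$, the image of $F$ in $[M]$ has size at most $\binom{s}{l}=O(q^{3l})$. That is the true source of the exponent: the $3$ is Zhandry's cubic, and the $l$ is the number of hyperedges on $s$ vertices --- not a ``conversion cost'' between query models.
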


In the following, we set $\Sigma = \{0,1\}$, as appropriate for hypergraphs, but the same proof works essentially without modification for any $\Sigma$. 

Our proof strategy closely follows \cite{chailloux_symmetric_18}. We construct a randomized decision tree that estimates $\mathcal{P}$ using $O(q^{3l})$ queries from a quantum circuit that estimates $\mathcal{P}$ using $q$ queries.

By definition of $q$, there exists a $q$-query quantum circuit $\mathsf{Q}'_x$ of the form Eq.~\eqref{eq:generic_quantum_circuit} which estimates $\mathcal{P}$. By repeating $\mathsf{Q}'_x$ three times, and outputting a bit according to majority vote, the probability of success can be boosted to at least $3\cdot\left(\sfrac{2}{3}\right)^2 \cdot\left(\sfrac{1}{3}\right) +  \left(\sfrac{2}{3}\right)^3 = \sfrac{20}{27}$. Therefore, there exists a $3q$-query quantum circuit 
\begin{equation}
    \mathsf{Q}_x = U_{3q} \, O_x \, U_{3q-1} \, \dots U_1 \, O_x \, U_0
\end{equation}
which $(\sfrac{20}{27})$-estimates $\mathcal{P}$. Note that $x\in \{0,1\}^M$ describes the input graph and $O_{x}$ acts on $\mathbb{C}^M \otimes \mathbb{C}^2$.

Let $N\coloneqq \sum_{i=1}^l\binom{n}{i}$. Recall the notation from Eq.~\eqref{def:hypergraph_edges} that $E_i$ is the set of all hyperedges involving $i$ distinct vertices. We identify $[N]$ with the set $\bigcup_{i=1}^lE_{l-(i-1)}$ in the order the union is written. For example, $[M]\subset [N]$ is identified with $E_l$. The reason for defining $N$ and identifying $[N]$ this way shall become clear later, in the second remark following our definition of $D_r$. Suffice it to say now that we shall need to consider functions that map hyperedges connecting $l$ vertices to ones connecting between $1$ and $l$ vertices.

Now let $F$ be a function from $[M]$ to $[N]$. We define the oracle $O_{x\circ F}$, acting on $\mathbb{C}^M \otimes \mathbb{C}^N \otimes \mathbb{C}^2$, as follows.

\begin{enumerate}
    \item Define oracle $\tilde{O}_x$ on $\mathbb{C}^N\otimes \mathbb{C}^2$ by
    \begin{equation}\label{eq:stronger_oracle}
    \tilde{O}_x: \ket{i}\ket{y} \mapsto 
    \begin{dcases*} \ket{i} \ket{y\oplus x(i)} &\text{ if } \hspace{0.25em} $1 \leq i\leq M$,\\ \ket{i}\ket{y} &\text{ if }  $M < i \leq N$.\end{dcases*}
\end{equation}

    \item Define unitary $O_F$ on $\mathbb{C}^M \otimes \mathbb{C}^N$ by
    \begin{equation}
        O_F\colon \ket{i}\ket{j} \mapsto \ket{i}\ket{ j + F(i) \mod{N}}.
    \end{equation}
    
    \item Define oracle $O_{x\circ F}$ on  $\mathbb{C}^M \otimes \mathbb{C}^N \otimes \mathbb{C}^2$ by 
    \begin{equation}
        O_{x\circ F} \coloneqq (O_F^{\dagger} \otimes \mathbb{I}_2)(\mathbb{I}_{M} \otimes \tilde{O}_x)(O_F \otimes \mathbb{I}_2).
    \end{equation}
    
    Then, for $i\in [M]$ and $b\in \Sigma$, we have
    \begin{equation}
        O_{x\circ F}: \ket{i}\ket{0_{\mathbb{C}^N}}\ket{b} \mapsto \ket{i}\ket{0_{\mathbb{C}^N}}\ket{b'},
    \end{equation}
    where
    \begin{equation}\label{eq:x_circ_F_output}
        \ket{b'} \coloneqq \begin{dcases*}  \ket{b\oplus (x\circ F)(i)} &\text{ if } \hspace{.5em}$1 \leq F(i)\leq M$,\\
        \ket{b} &\text{ if }  $M < F(i) \leq N$.
    \end{dcases*}
    \end{equation}
\end{enumerate}

Recall that $O_x$ acts on $\mathbb{C}^M\otimes \mathbb{C}^2$ and $O_{x\circ F}$ acts on  $\mathbb{C}^M \otimes \mathbb{C}^N \otimes \mathbb{C}^2$. Therefore, we may replace each $O_x$ appearing in $\textsf{Q}_x$ by $O_{x\circ F}$ in the natural way that matches the $\mathbb{C}^M\otimes \mathbb{C}^2$ register and leaves an additional $\mathbb{C}^N$ register. We call the resulting circuit $\mathsf{Q}_{x}(F)$. Note that $\mathsf{Q}_{x}(F)$ has $3q$  $O_F$ gates and $3q$ $O_F^{\dagger}$ gates giving a total of $6q$ gates that are each either $O_F$ or $O_F^{\dagger}$.

If $F$ is a permutation of $[M]$, i.e., $F$ bijects its domain $[M]$ with $[M] \subset [N]$, then
$\mathsf{Q}_{x}(F)$ is essentially the same as $\mathsf{Q}_{x\circ{F}}$, albeit with an additional $\mathbb{C}^N$ register. If, in addition, $F$ is a permutation of $[M]$ induced by a permutation of vertices $[n]$, then $\mathcal{P}(x) = \mathcal{P}(x\circ F)$ as $\mathcal{P}$ is a graph property. Therefore, we have
\begin{equation}\label{eq:graph_permuted_circuit_correctness_bound}
    \probblank{\mathsf{Q}_{x}(F) \text{ outputs } \mathcal{P}(x)} = \probblank{Q_{x\circ F} \text{ outputs } \mathcal{P}(x\circ F)} \geq \frac{20}{27}
\end{equation}
as $\mathsf{Q}_x$ $(\sfrac{20}{27})$-estimates $\mathcal{P}$.

The core argument of our proof is that $\mathsf{Q}_{x}(F)$ can behave similarly to $\mathsf{Q}_{x}$ even when $F$ has a limited range. This argument uses the following Theorem~\ref{theorem:zhandry_main} on a suitable family of distributions on functions $F\colon [M] \rightarrow [N]$.

\begin{Theorem}[{\cite{zhandry_how_construct_quantum_random_12}*{Theorem~7.3}}]\label{theorem:zhandry_main}
Let $\mathfrak{D}_r$ be a family of distributions on functions $F\colon [M] \rightarrow [N]$, indexed by $r\in \mathbb{Z}_{\geq 1}\cup \{\infty\}$. Suppose there is an integer $d$ such that the following holds. Fix $q\in \mathbb{Z}_{\geq 0}$, and then fix $2q$ pairs $(d_i,e_i)\in [M]\times [N]$. Suppose there exists a polynomial $p:\mathbb{R}\rightarrow \mathbb{R}$ of degree at most $d$ such that
\begin{equation}
    p(1/r) = \probbig{F \sim \mathfrak{D}_r}{F(d_i) = e_i \textup{ for all } i\in[2q]}.
\end{equation}
for all $r\in \mathbb{Z}_{\geq 1}\cup \{\infty\}$.
Then, for any $\{0,1\}$-output quantum circuit $\mathsf{Q(F)}$ making $q$ quantum queries, each to either\footnote{Strictly speaking, in the original statement of Theorem~\ref{theorem:zhandry_main}, queries are made only to $O_F$. However, no adjustment to the resulting Eq.~\eqref{eq:output_distribution_closeness} is needed when queries can also be made to $O_F^{\dagger}$. This can be easily seen by examining the proof of \cite{zhandry_how_construct_quantum_random_12}*{Theorem~7.1}, given in \cite{zhandry_identity_based_encryption_12}*{Appendix~B.1}, and noting that $O_{F}^{\dagger}$ is simply the unitary on $\mathbb{C}^M\otimes \mathbb{C}^N$ acting by $\ket{i}\ket{j} \mapsto \ket{i}\ket{j - F(i) \mod{N}}$.} $O_F$ or $O_F^{\dagger}$, we have
\begin{equation}\label{eq:output_distribution_closeness}
    \sum_{z\in\{0,1\}} \absbig{\probbig{F\sim \mathfrak{D}_{r}}{\mathsf{Q}(F) \textup{ outputs } z}  - \prob{F\sim \mathfrak{D}_{\infty}}{\mathsf{Q}(F) \textup{ outputs } z}} \leq \frac{\pi^2 d^3}{3 \, r}.
\end{equation}
\end{Theorem}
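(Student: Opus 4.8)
The plan is to follow the polynomial method of Zhandry. Write a generic such circuit as $\mathsf{Q}(F) = U_q \, O_F^{(\epsilon_q)} \, U_{q-1} \cdots U_1 \, O_F^{(\epsilon_1)} \, U_0$, where each $O_F^{(\epsilon_t)}$ is $O_F$ or $O_F^{\dagger}$, acting on $\mathbb{C}^M\otimes\mathbb{C}^N$ together with a fixed ancilla register on which the oracles act trivially, and let $\ket{\psi_F} \coloneqq \mathsf{Q}(F)\ket{0}$, so that $\probblank{\mathsf{Q}(F)\text{ outputs }z} = \bra{\psi_F}\Pi_z\ket{\psi_F}$ for $\Pi_z$ the projector onto decision value $z$. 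Set $\rho_r \coloneqq \mathbb{E}_{F\sim\mathfrak{D}_r}\!\big[\ket{\psi_F}\!\bra{\psi_F}\big]$, so that $\probbig{F\sim\mathfrak{D}_r}{\mathsf{Q}(F)\text{ outputs }z} = \mathrm{Tr}(\Pi_z\rho_r)$. The key step is to show that each entry of $\rho_r$ is a polynomial of degree at most $d$ in the single variable $1/r$, with coefficients independent of $r$.

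To see this, introduce the indicator variables $\chi_{i,v} \coloneqq [F(i)=v]$ and observe that $O_F$ and $O_F^{\dagger}$ are \emph{linear} in them: $O_F^{(\epsilon)} = \sum_{i\in[M],\,v\in[N]} \chi_{i,v}\,\big(\ketbra{i}{i}\otimes S_{\epsilon v}\big)$, where $S_w$ is the cyclic shift by $w$ on $\mathbb{C}^N$ and $\epsilon = \pm1$. Expanding the product defining $\mathsf{Q}(F)$ and composing with the (fixed) unitaries $U_t$, each amplitude $\braket{x}{\psi_F}$ becomes a polynomial of degree at most $q$ in the $\chi_{i,v}$; using $\chi_{i,v}^2=\chi_{i,v}$ and $\chi_{i,v}\chi_{i,v'}=0$ for $v\neq v'$, every surviving monomial is a product $\prod_{t=1}^{s}\chi_{d_t,e_t}$ over \emph{distinct} indices $d_1,\dots,d_s\in[M]$ with $s\le q$. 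Hence each matrix entry $\braket{x}{\psi_F}\,\overline{\braket{y}{\psi_F}}$ of $\ket{\psi_F}\!\bra{\psi_F}$ is a polynomial whose monomials are products over at most $2q$ distinct indices. Averaging termwise over $F\sim\mathfrak{D}_r$, the expectation of such a monomial equals $\probbig{F\sim\mathfrak{D}_r}{F(d_t)=e_t \text{ for all }t\in[s]}$ with $s\le 2q$, which by the hypothesis is a polynomial of degree at most $d$ in $1/r$ (for $s<2q$, marginalize the $2q$-constraint probability over $2q-s$ fresh indices, which preserves the degree bound). Summing over the finitely many monomials shows every entry of $\rho_r$, and hence $P_z(1/r)\coloneqq\mathrm{Tr}(\Pi_z\rho_r)$, is a polynomial of degree at most $d$ in $1/r$.

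Since $\mathrm{Tr}(\rho_r)=1$ for every $r\in\mathbb{Z}_{\geq1}\cup\{\infty\}$, we have $P_0(1/r)+P_1(1/r)\equiv1$, so $P_0(1/r)-P_0(0)=-\big(P_1(1/r)-P_1(0)\big)$ and the left-hand side of Eq.~\eqref{eq:output_distribution_closeness} equals $2\,\absbig{P_0(1/r)-P_0(0)}$. Moreover $P_z(1/r)\in[0,1]$ for every $r\in\mathbb{Z}_{\geq1}\cup\{\infty\}$, being a genuine probability. The theorem thus reduces to the following analytic statement: if $P$ is a real polynomial of degree at most $d$ with $P(1/k)\in[0,1]$ for all $k\in\mathbb{Z}_{\geq1}$ and $P(0)\in[0,1]$, then $\absbig{P(1/r)-P(0)}\le\pi^2 d^3/(6r)$ for every $r\in\mathbb{Z}_{\geq1}$.

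For this I would first prove a uniform bound of the form $\sup_{t\in[0,1]}\abs{P(t)}\le\pi^2 d/6$: the nodes $1/k$ are spaced only $O(1/k^2)$ apart, which is below the resolution of a degree-$d$ polynomial once $k\gtrsim d$, so on $[0,1/d]$ the values on the infinitely many nodes pin down $P$ everywhere, while for $r\le d$ the point $1/r$ is itself a node and $\absbig{P(1/r)-P(0)}\le1\le\pi^2 d^3/(6r)$ outright; the constant $\pi^2/6=\sum_{k\ge1}k^{-2}$ enters through summing the gap lengths. Given the uniform bound, Markov's inequality yields $\sup_{[0,1]}\abs{P'}\le d^2\sup_{[0,1]}\abs{P}\le\pi^2 d^3/6$, and telescoping $P(1/r)-P(0)=-\sum_{k\ge r}\big(P(1/k)-P(1/(k+1))\big)$ with the mean value theorem bounds each term by $\sup_{[0,1]}\abs{P'}\cdot\big(1/k-1/(k+1)\big)$, so $\absbig{P(1/r)-P(0)}\le\frac{\pi^2 d^3}{6}\sum_{k\ge r}\frac1{k(k+1)}=\frac{\pi^2 d^3}{6r}$; doubling gives Eq.~\eqref{eq:output_distribution_closeness}. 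The main obstacle is precisely the uniform bound — controlling a low-degree polynomial on an entire interval from its values on a discrete set that accumulates at an endpoint, exploiting that infinitely many nodes cluster near $0$ — which is the genuinely new estimate in Zhandry's argument; the density-matrix bookkeeping and the telescoping are routine.
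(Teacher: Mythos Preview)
The paper does not prove this theorem; it is quoted as a black box from \cite{zhandry_how_construct_quantum_random_12}, with only the footnote observing that $O_F^{\dagger}$ queries are covered by the same argument. There is no proof in the paper to compare your proposal against.

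Your plan is nonetheless a correct reconstruction of Zhandry's argument. The bookkeeping in the first two paragraphs is right: writing each oracle as $\sum_{i,v}\chi_{i,v}\,(\ketbra{i}{i}\otimes S_{\pm v})$ makes amplitudes degree-$q$ in the indicators, entries of $\ket{\psi_F}\!\bra{\psi_F}$ degree $2q$, and averaging termwise reduces each monomial's expectation to a joint probability covered by the hypothesis. (Your marginalisation for $s<2q$ works but can be simplified: the $2q$ pairs in the hypothesis need not have distinct $d_i$, so one may simply repeat constraints to pad up to $2q$.) The reduction to a scalar statement about a single degree-$d$ polynomial $P$ with $P(1/k)\in[0,1]$ for all $k$ is exactly the right move, and you correctly identify the uniform bound on $\sup_{[0,1]}\lvert P\rvert$ as the real content.

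One minor slip in the final chain: Markov's inequality on $[0,1]$ reads $\sup\lvert P'\rvert\le 2d^{2}\sup\lvert P\rvert$, not $d^{2}\sup\lvert P\rvert$ (the textbook form is stated on $[-1,1]$). So to hit $\pi^{2}d^{3}/(3r)$ after doubling, the uniform bound you actually need is $\sup_{[0,1]}\lvert P\rvert\le\pi^{2}d/12$ rather than $\pi^{2}d/6$. This is constant-level fiddling and does not affect the soundness of the plan.
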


Note that the probabilities appearing in Eq.~\eqref{eq:output_distribution_closeness} take their natural meaning:
\begin{equation}~\label{eq:composed_distribution}
    \prob{F\sim \mathfrak{D}_r}{\mathsf{Q}(F) \textup{ outputs } z} \coloneqq  \sum_{F} \prob{\,}{\textsf{Q}(F) \text{ outputs } z} \cdot \prob{F \sim \mathfrak{D}_r}{F}.
\end{equation}

To utilise Theorem~\ref{theorem:zhandry_main}, we define, for each $r\in \mathbb{Z}_{\geq 0}$, a distribution $D_r$ on functions $f\colon [n] \rightarrow [n]$ and $F\colon [M] \rightarrow [N]$, where sampling is obtained by the following procedure.

\begin{enumerate}
    \item Draw a random function $g\colon [n]\rightarrow [r]$.
    \item Let $S=\{g(x): x\in [n]\}$. That is, $S$ is the range of $g$.
    \item Draw a random injective function $h\colon S\rightarrow [n]$ (these  functions exist since $|S|\leq n$).
    \item Output $f=h \circ g$.
    \item Output $F\colon [M] \rightarrow [N]$, defined by $F(\{u_1,\dots,u_l\}) \mapsto \{f(u_1),\dots, f(u_l)\}$ for all $\{u,v\}\in [M]$.
\end{enumerate}

For $r = \infty$, we further define $D_{\infty}$ by sampling $f$ as a random permutation on $[n]$ and $F$ as the permutation on $[M]$ induced by $f$. Because $\mathcal{P}$ is a graph property, any such $F$ must satisfy $\mathcal{P}(x) = \mathcal{P}(x \circ F)$, cf. Eq.~\eqref{eq:graph_property_invariance}. Intuitively, $D_\infty$ can be thought of as the limit of $D_r$, with $r\in \mathbb{Z}_{\geq 0}$, as $r\rightarrow \infty$.

We make two remarks. First, the distribution $D_r$ on $f$ is exactly the same as that defined in~\cite{zhandry_note_collision_set_equality_15}*{Sec.~3.1}. Second, in Step 5, because $f$ may map multiple inputs to the same output, sets in the image of $F$ may have fewer than $l$ elements. In fact, they may be any of the $N$ sets in $\bigcup_{i=1}^lE_{l-(i-1)}$ which recall we have identified with $[N]$.

Now, the following Lemma~\ref{lemma:probability_degree_bound} allows us to apply Theorem~\ref{theorem:zhandry_main} to the distribution $D_r$ on $F$. Our Lemma~\ref{lemma:probability_degree_bound} can be deduced from and compared with~\cite{zhandry_note_collision_set_equality_15}*{Lemma 1}.

\begin{Lemma}\label{lemma:probability_degree_bound}
Fix $k\in \mathbb{Z}_{\geq 1}$, and then fix $k$ pairs $(d_i, e_i) \in [M] \times [N]$. Then, there exists a polynomial $p:\mathbb{R}\rightarrow \mathbb{R}$ of degree at most $kl-1$ such that
\begin{equation}
    p(1/r) = \probbig{F\sim D_r}{F(d_i)=e_i \textup{ for all } i\in[k]}
\end{equation}
for all $r \in \mathbb{Z}_{\geq 1}\cup\{\infty\}$.
\end{Lemma}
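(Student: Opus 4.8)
The plan is to compute the probability $\probbig{F\sim D_r}{F(d_i)=e_i\text{ for all }i\in[k]}$ directly from the sampling procedure for $D_r$ and verify that, after expressing the random function $g$ as a sequence of $n$ independent uniform draws from $[r]$, the answer becomes a polynomial in $1/r$ of controlled degree. First I would unpack what the constraint $F(d_i)=e_i$ means at the level of the underlying function $f\colon[n]\to[n]$: if $d_i$ corresponds to the hyperedge $\{u_1^{(i)},\dots,u_l^{(i)}\}$ and $e_i$ to the (possibly smaller) set $\{w_1^{(i)},\dots\}$, then $F(d_i)=e_i$ is equivalent to a system of equations and inequations on the values $f(u_1^{(i)}),\dots,f(u_l^{(i)})$ — namely a prescribed partition of these $l$ vertices into blocks together with the requirement that $f$ send each block to a distinct prescribed element of $[n]$. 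Crucially, each constraint $F(d_i)=e_i$ involves at most $l$ vertices, so all $k$ constraints together involve a set $V$ of at most $kl$ vertices, and the event is determined entirely by the restriction $f|_V$.

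Next I would peel off the two-stage structure $f=h\circ g$. Condition on the pattern $P$ of collisions of $g$ restricted to $V$, i.e.\ on the partition of $V$ induced by the equivalence relation $g(u)=g(v)$; write $t=t(P)$ for the number of blocks, so $t\le |V|\le kl$. Given this partition, whether the event holds is a purely combinatorial condition on $P$ (the prescribed partitions coming from the $e_i$ must be consistent with, i.e.\ coarsenings of restrictions of, $P$, and the prescribed images must be consistent across overlapping constraints), and conditioned on $P$ being a ``good'' partition, the probability that the injection $h\colon S\to[n]$ sends the $t$ relevant classes to the prescribed distinct targets depends only on $t$ and equals a fixed rational number independent of $r$ — something like $(n-t)!/n!$ times a combinatorial factor accounting for how many elements of $S$ are constrained. (Here I use that, conditioned on $|S|$, $h$ is a uniformly random injection, and the constrained classes are a subset of $S$ of size $t$; one must be slightly careful that the distribution of which classes are constrained is exchangeable, which it is by symmetry of the uniform $g$.) So the whole probability is $\sum_{P\text{ good}} c_t \cdot \probr{g|_V\text{ induces partition }P}$ for constants $c_t$ not depending on $r$.

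It therefore remains to show that for any fixed partition $P$ of $V$ into $t$ blocks, $\probr{g|_V\text{ induces }P}$ is a polynomial in $1/r$ of degree at most $|V|-t$; summing over the finitely many good $P$ and noting $|V|\le kl$ and $t\ge 1$ then gives degree at most $kl-1$. This last fact is a standard ``falling-factorial'' computation: the probability that $n$ (or just the $|V|$ relevant) uniform draws from $[r]$ realize exactly the collision pattern $P$ is $\tfrac{r(r-1)\cdots(r-t+1)}{r^{|V|}}=\tfrac{1}{r^{|V|}}\prod_{j=0}^{t-1}(r-j)$, which, written in the variable $s=1/r$, is $s^{|V|-t}\prod_{j=0}^{t-1}(1-js)$ — a polynomial in $s$ of degree exactly $|V|-t$. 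This is precisely the computation underlying \cite{zhandry_note_collision_set_equality_15}*{Lemma 1}, which handles the $l=1$ case, and the only new ingredient here is the reduction in the first two paragraphs from hyperedge constraints to vertex constraints, which merely inflates the vertex count from $k$ to $kl$. The verification at $r=\infty$ is automatic: $D_\infty$ has $g$ replaced by a random permutation, the event forces $f$ injective on the constrained classes, and plugging $s=0$ into the polynomial picks out exactly the $P$ with $t=|V|$ (no collisions), recovering the $D_\infty$ probability.

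The main obstacle I anticipate is bookkeeping in the first two paragraphs: carefully phrasing the event $\{F(d_i)=e_i\ \forall i\}$ as a consistent system of block-merging and image-prescription constraints on a vertex set of size $\le kl$, and checking that the conditional probability over the injection $h$ genuinely depends only on the block count $t$ (rather than on finer features of $P$ or on $r$). Handling ``inconsistent'' constraints — where the event has probability $0$ because the $e_i$ cannot simultaneously be realized — should be absorbed by letting the sum over good $P$ be empty, giving the zero polynomial, which has degree $\le kl-1$ vacuously. Once that reduction is in place, the degree count is the routine falling-factorial argument already present in the cited lemma.
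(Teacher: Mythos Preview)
Your approach is correct and lands on the same core reduction as the paper: the event $\{F(d_i)=e_i\ \forall i\}$ depends only on $f$ restricted to the at most $kl$ vertices $V=\bigcup_i d_i$, and from there the polynomial-in-$1/r$ structure is exactly \cite{zhandry_note_collision_set_equality_15}*{Lemma~1}. The paper's proof is simply a terser packaging of this: instead of conditioning on the collision pattern of $g|_V$ and handling $h$ by hand, it observes that the hyperedge event is a \emph{disjoint union} of vertex-assignment events of the form $\{f(u_1)=v_{j_1},\dots,f(u_a)=v_{j_a}\}$ with $a\le kl$, and then cites Zhandry's lemma as a black box for each such event. Your route re-derives that lemma inline; the paper's route separates concerns more cleanly and avoids having to argue (as you do) that the $h$-conditional probability is independent of $|S|$ and hence of $r$.

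Two minor slips to fix. First, your constants $c_t$ do not depend only on $t$: the number of ``good'' image assignments for a given partition $P$ depends on the structure of the $e_i$'s, not just on the block count. What you actually need (and what is true) is that $c_P$ is independent of $r$, which suffices. Second, your degree count is off: $s^{|V|-t}\prod_{j=0}^{t-1}(1-js)$ has degree $(|V|-t)+(t-1)=|V|-1$, not $|V|-t$, since the product contributes degree $t-1$. The final bound $|V|-1\le kl-1$ still holds, so the conclusion is unaffected.
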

\begin{proof}
For each $i$, write $d_i = \{u_{1}^{(i)},\dots u_{l}^{(i)}\}$ and $e_i= \{v_{1}^{(i)}, \dots v_{l_i}^{(i)}\}$, where $1\leq l_i \leq l$. By definition, we have
\begin{equation}
    F(d_i)=e_i \iff \{f(u_{1}^{(i)}), \dots, f(u_{l}^{(i)})\} = \{v_{1}^{(i)}, \dots, v_{l_i}^{(i)}\}.
\end{equation}

Write $\{u_j\}_{j=1}^{a} = \bigcup_id_i$ and $\{v_j\}_{j=1}^b = \bigcup_ie_i$, where $a\leq kl$ and $b\leq \sum_{i=1}^k l_i$. Then, the event $\{F(d_i)=e_i$ for all $i\in[k]\}$ can be expressed as a disjoint union of events of the form
\begin{equation}\label{eq:shattered}
    \{f(u_1) = v_{j_1}, \dots, f(u_a) = v_{j_a}\}.
\end{equation}
Therefore, $P_{F\sim D_r}(F(d_i)=e_i$ for all $i\in[k])$ is a sum of probabilities of the form
\begin{equation}\label{eq:shattered_probabilities}
    P_{f\sim D_r}(f(u_1) = v_{j_1}, \dots, f(u_a) = v_{j_a}).
\end{equation}
But \cite{zhandry_note_collision_set_equality_15}*{Lemma 1} says that the probabilities in Eq.~\eqref{eq:shattered_probabilities} can be represented by a polynomial in $1/r$ of degree at most $a-1\leq kl-1$. Hence the Lemma follows.
\end{proof}

Now, setting $k$ to $12q$ in Lemma~\ref{lemma:probability_degree_bound}, we see that the hypothesis of Theorem~\ref{theorem:zhandry_main}, with $q$ set to $6q$, holds for $d = 6ql-1$. Therefore, its conclusion, Eq.~\eqref{eq:output_distribution_closeness}, holds for $Q_{x}(F)$, which recall uses $6q$ queries each to either $O_F$ or $O_F^\dagger$, and all $r$. By setting $r$ equal to
\begin{equation}\label{eq:draw_distribution_s}
    s \coloneqq  \ceil[\Bigg]{\frac{\pi^2 (12ql-1)^3}{3} \cdot \frac{27}{2}}
\end{equation}
in Eq.~\eqref{eq:output_distribution_closeness}, we deduce
\begin{equation}~\label{eq:output_distribution_closeness_applied}
    \sum_{z\in\{0,1\}} \absbig{\prob{F\sim D_s}{\mathsf{Q}_x(F) \textup{ outputs } z}  - \prob{F\sim D_{\infty}}{\mathsf{Q}_x(F) \textup{ outputs } z}} \leq \frac{2}{27}.
\end{equation}

Having defined $s$, we can now describe a randomized decision tree $\mathsf{R}$ that estimates $\mathcal{P}$ as follows.

\begin{mdframed}
$\mathsf{R}$. Given an input graph $x\in \{0,1\}^M$, do the following.
\begin{enumerate}
    \item Draw a random function $F\colon [M]\rightarrow[N]$ according to $D_s$.
    \item Query bits $x(i)$ for those $i\leq M$ in the image of $F$, i.e., $i\in I\coloneqq \textup{Im}(F)\cap[M]$.
    \item Output $z \in \{0,1\}$ according to the output distribution of quantum circuit $\mathsf{Q}_{x}(F)$.
\end{enumerate}
\end{mdframed}

First, by recalling from Eq.~\eqref{eq:x_circ_F_output} the action of $O_{x\circ{F}}$, we see that the output distribution of $\mathsf{Q}_{x}(F)$ only depends on $F$ and the values taken by $x(i)$ for $i\in I$. This output distribution can be \textit{pre-computed} prior to executing $\mathsf{R}$, and $\mathsf{R}$ simply draws from it in Step~3.

Second, let us prove the correctness of $\mathsf{R}$.
\begin{Lemma}
$\mathsf{R}$ estimates $\mathcal{P}$.
\end{Lemma}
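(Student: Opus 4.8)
The plan is to chain together the two facts already established: the output distribution of $\mathsf{Q}_x(F)$ is close to that of the honestly-permuted circuit (averaged over $D_\infty$), and the latter is correct with probability at least $\sfrac{20}{27}$ pointwise. First I would observe that by construction $\mathsf{R}$, on input $x$, outputs $z$ with probability exactly $\prob{F\sim D_s}{\mathsf{Q}_x(F)\text{ outputs }z}$, since Step 2 queries precisely those positions of $x$ that the action of $O_{x\circ F}$ (recall Eq.~\eqref{eq:x_circ_F_output}) can depend on, and Step 3 then draws from the pre-computed output distribution. So it suffices to show $\prob{F\sim D_s}{\mathsf{Q}_x(F)\text{ outputs }\mathcal{P}(x)} \geq \sfrac{2}{3}$.

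Next I would lower-bound $\prob{F\sim D_\infty}{\mathsf{Q}_x(F)\text{ outputs }\mathcal{P}(x)}$. Under $D_\infty$, the function $F$ is the permutation of $[M]$ induced by a random permutation $f$ of $[n]$, so Eq.~\eqref{eq:graph_permuted_circuit_correctness_bound} applies for every such $F$ in the support, giving $\probblank{\mathsf{Q}_x(F)\text{ outputs }\mathcal{P}(x)}\geq \sfrac{20}{27}$ pointwise; averaging over $D_\infty$ preserves this bound. Then I would apply Eq.~\eqref{eq:output_distribution_closeness_applied}, which was obtained precisely by invoking Theorem~\ref{theorem:zhandry_main} (with the degree bound from Lemma~\ref{lemma:probability_degree_bound}) on the $6q$-query circuit $\mathsf{Q}_x(F)$, and the choice of $s$ in Eq.~\eqref{eq:draw_distribution_s}. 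Taking the single term $z = \mathcal{P}(x)$ in that sum yields
\begin{equation}
    \probbig{F\sim D_s}{\mathsf{Q}_x(F)\text{ outputs }\mathcal{P}(x)} \geq \probbig{F\sim D_\infty}{\mathsf{Q}_x(F)\text{ outputs }\mathcal{P}(x)} - \frac{2}{27} \geq \frac{20}{27} - \frac{2}{27} = \frac{2}{3}.
\end{equation}
This establishes that $\mathsf{R}$ $(\sfrac{2}{3})$-estimates $\mathcal{P}$, i.e., $\mathsf{R}$ estimates $\mathcal{P}$.

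It also remains to note (or this may be deferred to the proof of the main theorem) that $\mathsf{R}$ is a legitimate randomized decision tree making few queries: Step 1 is just internal randomness, Step 3 uses only pre-computed data, and Step 2 queries at most $|\mathrm{Im}(F)\cap[M]| \leq |\mathrm{Im}(f)| \leq $ (the size of the range of $g\colon[n]\to[s]$) $\leq s = O((ql)^3)$ positions. I do not expect any real obstacle here; the only thing to be careful about is the bookkeeping that the $z = \mathcal{P}(x)$ term alone, not the full variation-distance sum, is what gets used, and that the pointwise bound Eq.~\eqref{eq:graph_permuted_circuit_correctness_bound} genuinely holds for every $F$ in the support of $D_\infty$ because each such $F$ is vertex-induced — this is exactly where the graph-property hypothesis enters.
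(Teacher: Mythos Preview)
Your proof of the lemma is correct and follows essentially the same chain as the paper's: identify $\mathsf{R}$'s output distribution with $\prob{F\sim D_s}{\mathsf{Q}_x(F)\text{ outputs }\cdot}$, bound the $D_\infty$ case pointwise by $\sfrac{20}{27}$ via Eq.~\eqref{eq:graph_permuted_circuit_correctness_bound}, then lose $\sfrac{2}{27}$ via Eq.~\eqref{eq:output_distribution_closeness_applied}. One caveat on your parenthetical about the query count: $|\mathrm{Im}(F)\cap[M]|$ is \emph{not} bounded by $|\mathrm{Im}(f)|$ but by $\binom{|\mathrm{Im}(f)|}{l}\le\binom{s}{l}$ (since $F$ outputs $l$-subsets of $\mathrm{Im}(f)$), which is why the final exponent in Theorem~\ref{result:adjacency_matrix} is $3l$ rather than $3$.
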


\begin{proof}
Let $x\in S\subset \{0,1\}^M$ be an input graph, and write $z \coloneqq \mathcal{P}(x)$.

Since any $F$ drawn from $D_{\infty}$ is a permutation of $[M]$ induced by a permutation of vertices $[n]$, Eq.~\eqref{eq:graph_permuted_circuit_correctness_bound} holds, i.e., $P(\mathsf{Q}_{x}(F) \text{ outputs } z) \geq \frac{20}{27}$. So, by Eq.~\eqref{eq:composed_distribution}, we also have $\prob{F\sim D_{\infty}}{\mathsf{Q}_{x}(F) \text{ outputs } z} \geq \frac{20}{27}$. So
\begin{equation}\label{eq:randomize_estimates}
\begin{aligned}
   \probblank{\mathsf{R} \text{ outputs } z \text{ on input } x} &= \prob{F\sim D_s}{\mathsf{Q}_{x}(F) \text{ outputs } z}\\
   &\geq \prob{F\sim D_{\infty}}{\mathsf{Q}_{x}(F) \text{ outputs } z} - \frac{2}{27} \geq \frac{20}{27} - \frac{2}{27} = \frac{2}{3},
\end{aligned}
\end{equation}
where the first equality follows from the definition of $\mathsf{R}$ and the first inequality follows from Eq.~\eqref{eq:output_distribution_closeness_applied}. 

But $x\in S$ was arbitrary. Therefore, the last bound of Eq.~\eqref{eq:randomize_estimates} says that $\mathsf{R}$ estimates $\mathcal{P}$.
\end{proof}

Now, the query complexity, $\text{cost}(\mathsf{R})$, of $\mathsf{R}$ is the size of the set $I = \textup{Im}(F) \cap [M]$, which is at most $\binom{s}{l}$.
Then, by substituting in the expression for $s$ from Eq.~\eqref{eq:draw_distribution_s}, we deduce
\begin{equation}
    R(\mathcal{P}) \leq \text{cost}(\mathsf{R}) \leq \frac{(43l)^{3l}}{l!}\cdot q^{3l} + (\text{lower order terms in } q) = O(Q(\mathcal{P})^{3l}),
\end{equation}
because $q = Q(\mathcal{P})$ by definition. Therefore, Result~\ref{result:adjacency_matrix} is proved.

\section{Example of exponential speedup in the adjacency list model}\label{sec:adjlist}

We show that an exponential quantum query speedup exists in the adjacency list model by presenting an explicit example. Our example is based on the glued-trees problem of~\cite{childs_2003} but with a modification so that the answer is invariant under vertex relabellings.

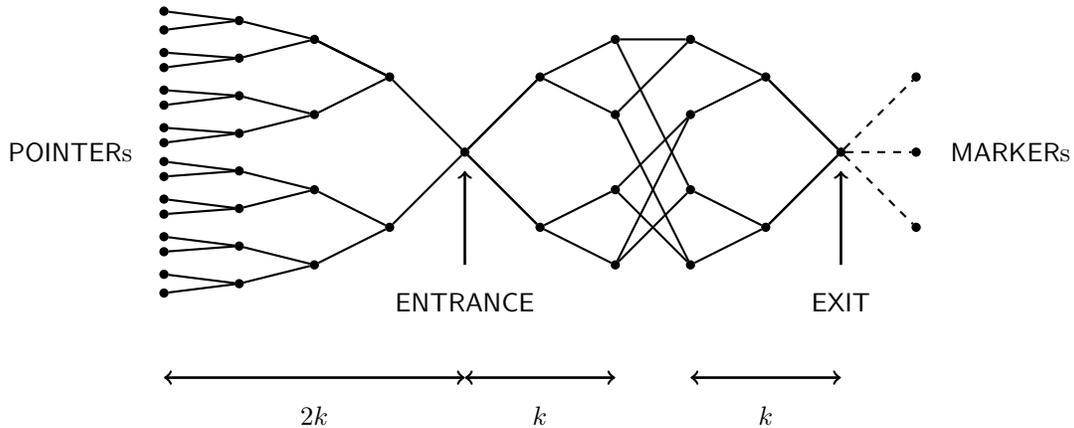
\begin{figure}[ht]
  \centering
  \begin{tikzpicture}[scale=1]
    \SetGraphUnit{1.5}
    \GraphInit[vstyle=Simple]
    \tikzset{VertexStyle/.style = {shape = circle,fill = black,minimum size = 3.5pt,inner sep=1pt}}
    
    \Vertex[x=-1,y=1]{T11}
    \Vertex[x=-1,y=-1]{T12}
    \Vertex[x=-2,y=1.5]{T21}
    \Vertex[x=-2,y=0.5]{T22}
    \Vertex[x=-2,y=-0.5]{T23}
    \Vertex[x=-2,y=-1.5]{T24}
    \Vertex[x=-3,y=1.75]{T31}
    \Vertex[x=-3,y=1.25]{T32}
    \Vertex[x=-3,y=0.75]{T33}
    \Vertex[x=-3,y=0.25]{T34}
    \Vertex[x=-3,y=-0.25]{T35}
    \Vertex[x=-3,y=-0.75]{T36}
    \Vertex[x=-3,y=-1.25]{T37}
    \Vertex[x=-3,y=-1.75]{T38}
    
    \Vertex[x=-4,y=1.875]{T41}
    \Vertex[x=-4,y=1.625]{T42}
    \Vertex[x=-4,y=1.325]{T43}
    \Vertex[x=-4,y=1.125]{T44}
    \Vertex[x=-4,y=0.825]{T45}
    \Vertex[x=-4,y=0.625]{T46}
    \Vertex[x=-4,y=0.325]{T47}
    \Vertex[x=-4,y=0.125]{T48}
    
    \Vertex[x=-4,y=-0.125]{T49}
    \Vertex[x=-4,y=-0.325]{T410}
    \Vertex[x=-4,y=-0.625]{T411}
    \Vertex[x=-4,y=-0.825]{T412}
    \Vertex[x=-4,y=-1.125]{T413}
    \Vertex[x=-4,y=-1.325]{T414}
    \Vertex[x=-4,y=-1.625]{T415}
    \Vertex[x=-4,y=-1.875]{T416}
    
    \Vertex[x=0,y=0]{O}
    
    \Vertex[x=1,y=1]{M11}
    \Vertex[x=1,y=-1]{M12}
    \Vertex[x=2,y=1.5]{M21} 
    \Vertex[x=2,y=0.5]{M22}
    \Vertex[x=2,y=-0.5]{M23}
    \Vertex[x=2,y=-1.5]{M24}
    
    \Vertex[x=3,y=1.5]{H11}
    \Vertex[x=3,y=0.5]{H12}
    \Vertex[x=3,y=-0.5]{H13}
    \Vertex[x=3,y=-1.5]{H14}
    \Vertex[x=4,y=1]{H21}
    \Vertex[x=4,y=-1]{H22}
    \Vertex[x=5,y=0]{E}
    
    \Vertex[x=6,y=1]{Mark1}
    \Vertex[x=6,y=0]{Mark2}
    \Vertex[x=6,y=-1]{Mark3}
    
    \Edges(O,M11,M21,H13,H22,E)
    \Edges(O,M11,M22,H11,H21,E)
    \Edges(O,M12,M23,H14,H22,E)
    \Edges(O,M12,M24,H12,H21,E)
    
    \Edges(M21, H11)
    \Edges(M22, H14)
    \Edges(M23, H12)
    \Edges(M24, H13)
    
    \Edges(T11, O)
    \Edges(T12, O)
    \Edges(T11, T21)
    \Edges(T11, T22)
    \Edges(T12, T23)
    \Edges(T12, T24)
    \Edges(T11, T21)
    \Edges(T11, T21)
    \Edges(T21, T31)
    \Edges(T21, T32)
    \Edges(T22, T33)
    \Edges(T22, T34)
    \Edges(T23, T35)
    \Edges(T23, T36)
    \Edges(T24, T37)
    \Edges(T24, T38)
    \Edges(T31, T41)
    \Edges(T31, T42)
    \Edges(T32, T43)
    \Edges(T32, T44)
    \Edges(T33, T45)
    \Edges(T33, T46)
    \Edges(T34, T47)
    \Edges(T34, T48)
    \Edges(T35, T49)
    \Edges(T35, T410)
    \Edges(T36, T411)
    \Edges(T36, T412)
    \Edges(T37, T413)
    \Edges(T37, T414)
    \Edges(T38, T415)
    \Edges(T38, T416)
    
    \Edges[style=dashed](E, Mark1)
    \Edges[style=dashed](E, Mark2)
    \Edges[style=dashed](E, Mark3)

    \node[text width=3cm, align=center] at (-5.25,0) {\textsf{POINTER}s};
    \node[text width=3cm, align=center] at (5,-2) {\textsf{EXIT}};
    \node[text width=3cm, align=center] at (0,-2) {\textsf{ENTRANCE}};
    \node[text width=3cm, align=center] at (7.25,0) {\textsf{MARKER}s};
    
    \draw [->, line width=1pt] (0,-1.5) -- (0,-.25);
    \draw [->, line width=1pt] (5,-1.5) -- (5,-.25);

    \draw[<->, line width=1pt] (-4,-3) -- (0,-3);
    \node[text width=3cm, align=center] at (-2,-3.5) {$2k$};
    \draw[<->, line width=1pt] (0,-3) -- (2,-3);
    \node[text width=3cm, align=center] at (1,-3.5) {$k$};
    \draw[<->, line width=1pt] (3,-3) -- (5,-3);
    \node[text width=3cm, align=center] at (4,-3.5) {$k$};
  \end{tikzpicture}
  \caption{An illustration of a modified glued-trees graph in the case $k=2$. This graph has a vertex of degree $5$, i.e., has graph property $\mathcal{P}_5$, when the three \textsf{MARKER}s are connected to \textsf{EXIT}. It does not have $\mathcal{P}_5$ when these \textsf{MARKER}s are isolated.}
  \label{fig:modified_glued_tree}
\end{figure}

Let $k \in \mathbb{Z}_{\geq 1}$. A glued trees graph of depth $2k+1$ is a graph consisting of two binary trees of equal depth $k$ joined together at their leaves by any cycle that alternates between the two trees. In Fig.~\ref{fig:modified_glued_tree}, the graph between \textsf{ENTRANCE} and \textsf{EXIT} (inclusive) is a glued tree with $k=2$ of depth $5$. Such graphs have $2(2^{k+1}-1)$ vertices.

In the original glued-trees problem, we are given an oracle that provides a black-box description of a graph that is the union of a glued-trees graph and a (much larger) number of isolated vertices (equivalently, there are many labels that do not refer to any vertex), as well as the label of \textsf{ENTRANCE}. We are required to output the label of \textsf{EXIT}. This is not a graph property because the answer depends on the labelling of the vertices by definition. Moreover, it does not allow an \textit{unconditional} comparison between $R$ and $Q$ since we are advised with the label of a particular vertex $\textsf{ENTRANCE}$. We now describe how we can overcome these two problems.

Consider the set of all glued-trees of depth $k$. To each of them, we append a binary tree of depth $2k$ to the left of \textsf{ENTRANCE}. This forms a new set which we call the set of modified glued trees. We illustrate a modified glued tree in Fig.~\ref{fig:modified_glued_tree}. We call any vertex on the extreme left a \textsf{POINTER}.

Let $A$ be the set of modified glued trees with three extra isolated vertices appended. Let $B$ be the set of modified glued trees with three extra vertices appended that are connected to \textsf{EXIT}. In both cases, we call the appended vertices \textsf{MARKER}s.

We define our promise set $S$ to be $A\cup B$ and ensure that $S$ contains all isomorphic graphs, i.e., all vertex relabellings. All graphs in $S$ have degree at most $d=5$. The graph property $\mathcal{P}_5$ that we consider is whether there exists a vertex of degree $5$. Of course, $\mathcal{P}_5(A) = \{0\}$ and $\mathcal{P}_5(B) = \{1\}$. 

\begin{Theorem}\label{thm:classical_lower}
Graph property $\mathcal{P}_5$, under the promise $S=A\cup B$, has $R(\mathcal{P}_5) = 2^{\Omega(k)}$.
\end{Theorem}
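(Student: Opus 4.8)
The plan is to prove the lower bound via Yao's minimax principle: it suffices to exhibit a distribution $\mu$ on $S = A\cup B$ and show that every deterministic algorithm making $T = 2^{o(k)}$ queries computes $\mathcal{P}_5$ correctly on a $\mu$-random input with probability below $\tfrac23$. Since $S$ is closed under isomorphism and $\mathcal{P}_5$ is a graph property, I would build the random labelling into $\mu$: it fixes the unlabelled shape of the depth-$2k$ appended tree and of the three \textsf{MARKER}s, chooses the gluing cycle of the glued-trees part uniformly among all cycles alternating between the two sets of $2^k$ leaves, chooses a uniformly random bijection between the resulting structure and the label set $[n]$ (so $n = \Theta(2^{2k})$), and finally flips a fair coin to decide whether the three \textsf{MARKER}s are isolated (input in $A$, $\mathcal{P}_5 = 0$) or joined to \textsf{EXIT} (input in $B$, $\mathcal{P}_5 = 1$). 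Coupling the two outcomes of the coin so that everything other than the adjacency lists of \textsf{EXIT} and of the three \textsf{MARKER}s is identical, the two oracles agree on every query except those of the form $(\textsf{EXIT},i)$ or $(m,i)$ with $m$ a \textsf{MARKER}. A standard argument then bounds the success probability of any $T$-query deterministic algorithm on $\mu$ by $\tfrac12 + \tfrac12\,\Pr[\textsf{EXIT}\text{ or a }\textsf{MARKER}\text{ is queried within }T\text{ queries}]$, so it remains to show this probability is $2^{-\Omega(k)}$ when $T = 2^{o(k)}$.

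For that I would run a lazy-revelation (deferred-decisions) argument, adapting the classical lower bound for the glued-trees problem of~\cite{childs_2003}. Reveal the random bijection and the random cycle only as the algorithm probes them. A query to a not-yet-seen label hits a uniformly random unrevealed vertex; \textsf{EXIT}, the \textsf{MARKER}s, and every vertex within graph-distance $k/2$ of \textsf{EXIT} comprise only a $2^{-\Omega(k)}$ fraction of the $n$ vertices, so over the at most $T$ such queries this contributes $T\cdot 2^{-\Omega(k)}$. The remaining queries follow edges, and here the instance's structure is used twice. First, the revealed graph stays a forest until the algorithm reveals two already-seen leaves to be adjacent on the gluing cycle; since that cycle has length $2^{k+1}$ and is revealed lazily, a birthday argument shows this requires $2^{\Omega(k)}$ queries, so for $T = 2^{o(k)}$ the forest invariant holds throughout with probability $1 - 2^{-\Omega(k)}$. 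Second, while the revealed graph is a forest the exploration is effectively undirected: all interior vertices of the three binary trees have degree $3$ and carry random labels, so the algorithm cannot tell a parent edge from a child edge. A random starting vertex therefore lies at graph-distance $\Omega(k)$ from \textsf{EXIT} with probability $1 - 2^{-\Omega(k)}$ (most of $[n]$ sits deep in the appended tree), and reaching \textsf{EXIT} would require revealing a path through the funnel-shaped right tree of the glued-trees part along which $\Omega(k)$ consecutive edge-choices all happen to point toward \textsf{EXIT} — an event of probability $2^{-\Omega(k)}$ when only $2^{o(k)}$ edges are revealed, by a progress/potential argument as in~\cite{childs_2003}. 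Summing, \textsf{EXIT} or a \textsf{MARKER} is queried within $T = 2^{o(k)}$ queries with probability at most $2^{-\Omega(k)}$.

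Combining the two paragraphs, a $T$-query deterministic algorithm succeeds on $\mu$ with probability at most $\tfrac12 + 2^{-\Omega(k)} + T\cdot 2^{-\Omega(k)}$, which is $<\tfrac23$ unless $T = 2^{\Omega(k)}$; by Yao's principle $R(\mathcal{P}_5) = 2^{\Omega(k)}$, as claimed. I expect the main obstacle to be the ``Second'' point above made fully rigorous and uniform over all adaptive algorithms: that a blind, undirected exploration of the glued-trees region cannot perform a directed search toward \textsf{EXIT} and that the random middle cycle offers no shortcut. This is exactly the content of the classical glued-trees lower bound, and the modifications in our construction only help — the appended tree and its many \textsf{POINTER}s place a random starting vertex far from \textsf{EXIT} and make fresh queries useless, the \textsf{MARKER}s turn ``locate \textsf{EXIT}'' into a genuine graph property, and removing the advice about \textsf{ENTRANCE} can only raise the randomized query complexity.
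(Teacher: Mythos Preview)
Your plan is sound and would yield a correct proof. The paper, however, takes a different and more modular route: rather than building a hard distribution and re-running a deferred-decisions argument, it reduces through a chain of four games. Game~A asks only to query \textsf{EXIT} given the oracle; Game~B additionally hands over the label of \textsf{ENTRANCE} together with which two of its four neighbours point toward \textsf{EXIT}; Game~C further restricts the algorithm to query only vertices to the right of \textsf{ENTRANCE} that have already been seen; and Game~D is literally Game~2 of~\cite{childs_2003}. Three short lemmas establish the chain: deciding $\mathcal{P}_5$ reduces to Game~A by essentially the coupling you describe plus a union bound on blindly hitting a \textsf{MARKER}; A~$\to$~B is free since B gives strictly more information; B~$\to$~C is handled by having the Game~C algorithm \emph{simulate} any off-path query of the Game~B algorithm using a random labelling of the appended depth-$2k$ tree consistent with what it has seen (this is where the abundance of \textsf{POINTER}s is used, and two bad events of probability $O(2^{-k})$ per step are union-bounded away); and C~$\to$~D is immediate by design. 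The known $2^{\Omega(k)}$ lower bound on Game~D then closes the argument.

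The trade-off is clear. Your Yao-plus-lazy-revelation approach is self-contained but must re-establish the heart of the glued-trees lower bound --- exactly the ``Second'' point you rightly flag as the main obstacle. The paper's game chain black-boxes that result entirely and isolates the genuinely new content (the \textsf{MARKER}s, the appended tree absorbing random guesses) into two short reductions of a few lines each. If you carry your plan through, the place to be most careful is the indistinguishability claim once the algorithm has located \textsf{ENTRANCE} (degree~$4$, so detectable): from that point on you are squarely in the setting of~\cite{childs_2003}, and invoking that result directly --- as the paper does --- is cleaner than reproving it.
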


As in the lower bound proof of \cite{childs_2003}, we prove this by reductions between games. First, observe that our problem is essentially equivalent to the following:

\begin{game}\label{game:original}
Given an oracle that provides a black-box description of a graph from $A \cup B$, which upon querying by the label of a vertex, returns as output the labels of all adjacent vertices. The algorithm wins as soon as it queries by the label of \textsf{EXIT}, i.e., the oracle returns either two $2$ or $5$ neighbours.
\end{game}

\begin{Lemma}
Suppose there is an algorithm $\mathcal{A}_{\mathcal{P}_5}$, using at most $T \leq 2^k$ queries, that correctly decides $\mathcal{P}_5$ for each graph in $A\cup B$ with probability at least $P$. Then, there is an algorithm $\mathcal{A}_A$, using at most $T$ queries, that wins Game A for each graph in $A \cup B$ with probability at least $P - O(T/2^{2k})$.
\end{Lemma}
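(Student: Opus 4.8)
The plan is to show that an algorithm deciding $\mathcal{P}_5$ can be converted into an algorithm that explicitly finds (queries) the label of \textsf{EXIT}, losing only a polynomially small probability. The key observation is that the only way the two families $A$ and $B$ differ is in whether the three \textsf{MARKER}s are isolated or attached to \textsf{EXIT}; equivalently, $\mathcal{P}_5$ holds if and only if \textsf{EXIT} has degree $5$ rather than $2$. So an algorithm that decides $\mathcal{P}_5$ must, in some sense, "notice" the degree of \textsf{EXIT}, and we want to argue it can only do this by actually querying \textsf{EXIT} (or a \textsf{MARKER}, which immediately reveals \textsf{EXIT}).

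First I would set up the coupling: run $\mathcal{A}_{\mathcal{P}_5}$ on the given oracle for a graph $G \in A \cup B$, and have $\mathcal{A}_A$ simulate it, watching the queries. The claim is that until the algorithm queries \textsf{EXIT} or a \textsf{MARKER}, the oracle's responses on an input from $A$ are \emph{identical} to its responses on the "partner" graph in $B$ obtained by reconnecting the three \textsf{MARKER}s to \textsf{EXIT} (and vice versa) — because the neighbour list of every vertex other than \textsf{EXIT} and the \textsf{MARKER}s is the same in $G$ and its partner. Hence, before such a query happens, $\mathcal{A}_{\mathcal{P}_5}$'s behaviour (its distribution over query sequences and internal state) is independent of whether $G \in A$ or $G \in B$. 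Now $\mathcal{A}_A$ simply runs $\mathcal{A}_{\mathcal{P}_5}$ and, whenever it observes a query that returns a neighbour list of size $2$ or $5$ consisting of the appropriate vertices — i.e., it detects that \textsf{EXIT} (or a \textsf{MARKER}, which points straight to \textsf{EXIT} with one more query, but $T \le 2^k$ leaves room) has been queried — it declares victory. This uses at most $T$ queries.

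To bound the success probability, I would argue as follows. Let $p$ be the probability that $\mathcal{A}_{\mathcal{P}_5}$, run on a graph in $A \cup B$, queries \textsf{EXIT} within its $T$ queries. If $p$ is large, $\mathcal{A}_A$ wins with probability close to $p$ (up to the small slack for the \textsf{MARKER} case, which is where the $O(T/2^{2k})$ comes from — a \textsf{MARKER} is queried only if the algorithm happens to guess one of three specific labels among the roughly $2^{2k}$ total labels, contributing an $O(T/2^{2k})$ term). If $p$ is small, then by the coupling argument the transcript of $\mathcal{A}_{\mathcal{P}_5}$ is (except on a probability-$O(T/2^{2k})$ event) distributed identically on $G \in A$ and on its partner $G' \in B$, so $\mathcal{A}_{\mathcal{P}_5}$ cannot distinguish them and its advantage over guessing is $O(T/2^{2k})$; but then its success probability $P$ must be close to $\tfrac12$ — more precisely $P \le \tfrac12 + p + O(T/2^{2k})$ — and rearranging gives $p \ge P - \tfrac12 - O(T/2^{2k})$, which, combined with $P \ge \tfrac23$ (or whatever constant is relevant), still makes $p$ bounded below by $P - O(T/2^{2k})$ after absorbing constants appropriately. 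Either way $\mathcal{A}_A$'s winning probability is at least $P - O(T/2^{2k})$.

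The main obstacle I anticipate is making the "coupling / indistinguishability" argument fully rigorous in the randomized (adaptive) setting: one must carefully define the partner involution on $A \cup B$, check it is measure-preserving on the promise set together with all its relabellings, and verify that \emph{every} oracle response before an \textsf{EXIT}/\textsf{MARKER} query is genuinely identical across partners (in particular that querying a vertex adjacent to \textsf{EXIT} returns \textsf{EXIT}'s label but not its degree, so no information leaks prematurely). The bookkeeping for the \textsf{MARKER}-label coincidence event — ensuring it really is only $O(T/2^{2k})$ and that conditioning on its complement does not distort the argument — is the other delicate point, and is exactly the source of the claimed error term.
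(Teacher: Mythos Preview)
Your coupling/partner idea is sound and is in fact the rigorous backbone that the paper's proof glosses over (the paper simply asserts that ``$\mathcal{A}_{\mathcal{P}_5}$ can only correctly decide $\mathcal{P}_5$ if it queries \textsf{EXIT} or a \textsf{MARKER}'' without spelling out why). However, your final step has a genuine gap. From the indistinguishability argument you correctly obtain something like
\[
P \;\le\; \tfrac12 + p + O(T/2^{2k}),
\]
hence $p \ge P - \tfrac12 - O(T/2^{2k})$. The additive $\tfrac12$ \emph{cannot} be ``absorbed into the $O(T/2^{2k})$ term'' no matter what constant $P$ is; $P - \tfrac12$ and $P$ differ by a fixed constant independent of $k$. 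Your case split into ``large $p$'' versus ``small $p$'' does not rescue this: in the ``large $p$'' branch you never argue that $p \ge P - O(T/2^{2k})$, only that $\mathcal{A}_A$ wins with probability close to $p$. Done carefully, the coupling yields a winning probability of $2P-1-O(T/2^{2k})$ (add the success probabilities on $G$ and its partner $G'$; the non-distinguishing part contributes at most $1$ in total), which is strictly weaker than the lemma's stated $P - O(T/2^{2k})$. This weaker bound still suffices for Theorem~\ref{thm:classical_lower}, since all that is needed downstream is a positive constant minus $o(1)$, but it does not establish the lemma as written.

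A secondary point you rightly flagged: the $O(T/2^{2k})$ bound for ever querying a \textsf{MARKER} (``three specific labels among roughly $2^{2k}$'') implicitly treats the labeling as uniformly random, whereas the lemma is stated for each fixed labeled graph. The paper's proof makes the same implicit leap; making this honest requires either restating the lemma over a random labeling or arguing that, conditioned on the transcript not touching \textsf{EXIT}, the \textsf{MARKER} labels remain uniform among the unqueried labels.
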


\begin{proof}
First, note that the oracle given in Game A is more powerful than the usual adjacency list oracle in that it outputs \textit{all} adjacent vertex labels. Therefore, one query of $\mathcal{A}_{\mathcal{P}_5}$ can be simulated by at most one query of $\mathcal{A}_A$. We let $\mathcal{A}_A$ simulate $\mathcal{A}_{\mathcal{P}_5}$.

In the case the input graph does not have $\mathcal{P}_5$, $\mathcal{A}_{\mathcal{P}_5}$ can only correctly decide $\mathcal{P}_5$ if it queries by the label of either $\textsf{EXIT}$ or a $\textsf{MARKER}$. In the former case, $\mathcal{A}_{\mathcal{P}_5}$ also wins. In the latter case, we may assume that $\mathcal{A}_A$ loses and that $\mathcal{A}_{\mathcal{P}_5}$ had not queried by the label of \textsf{EXIT}. But then the latter case occurs with probability at most
\begin{equation}
    T\cdot \frac{3}{2^{2k}-5\cdot 2^k} = O(T/2^{2k})
\end{equation}
by the union bound. Of course, the bound can be tightened by increasing the first term in the denominator to equal the total number of vertices in the modified glued-trees graph; but we used $2^{2k}$, i.e., the number of \textsf{POINTERS}, as it suffices to prove the Lemma. The ``$5$'' in the denominator is the maximum degree.

In the case the input graph is from set $B$, $\mathcal{A}_{\mathcal{P}_5}$ must query $\textsf{EXIT}$, because in this case a $\textsf{MARKER}$ is indistinguishable from a $\textsf{POINTER}$ as both have degree $1$. Therefore, $\mathcal{A}_A$ also wins.
\end{proof}

Now consider:

\begin{game}\label{game:more_info_constrained}
Given the same oracle as in Game~\ref{game:original}, the label of \textsf{ENTRANCE} and its four neighbours, and information about which two of the four are in the direction of \textsf{EXIT}. The algorithm wins as soon as it queries by the label of \textsf{EXIT}.
\end{game}

Since we are provided with more information at the outset in Game B than in Game A, it is clear that Game B is no harder to win than Game A. Now consider:

\begin{game}\label{game:hybrid}
Given the same oracle as in Game~\ref{game:original}, the label of \textsf{ENTRANCE} and its four neighbours, and information about which two of the four are in the direction of \textsf{EXIT}. At each step, the algorithm can only query by the label of a vertex to the right of \textsf{ENTRANCE} that had been given or previously returned by the oracle. The algorithm wins as soon as it queries by the label of \textsf{EXIT}.
\end{game}

Game C is not much harder to win than Game B. More precisely:
\begin{Lemma}
Suppose there is an algorithm $\mathcal{A}_B$, using at most $T = 2^{o(k)}\leq 2^k$ queries, that wins Game B for each graph in $A \cup B$ with probability at least $P_B$. Then, there is an algorithm $\mathcal{A}_C$, using at most $T$ queries, that wins Game C for each graph in $A \cup B$ with probability at least $P_B - O(T/2^k)$.
\end{Lemma}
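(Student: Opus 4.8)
The plan is to build $\mathcal{A}_C$ from $\mathcal{A}_B$ by simulation, showing that the three extra freedoms Game~B grants over Game~C---querying \textsf{ENTRANCE}, querying vertices to the left of \textsf{ENTRANCE}, and querying labels never yet seen---are almost never useful. This mirrors the corresponding reduction in the lower bound of \cite{childs_2003}.

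First I would reduce to the case where $\mathcal{A}_B$ never queries a ``fresh'' label, i.e.\ one that has not previously been given to it or returned by the oracle. Given any $\mathcal{A}_B$, modify it so that whenever it would query a fresh label it instead proceeds as if the oracle had reported an isolated vertex. This changes the run only if some queried fresh label is in fact a real (non-isolated) vertex of the modified glued-trees graph; the real vertices number $O(2^{2k})$ (dominated by the appended left binary tree of depth $2k$), while the model's label set is larger than this by a factor $\Omega(2^k)$---this is exactly where the ``much larger number of isolated vertices'' is used---so a union bound over the at most $T$ queries bounds the probability of divergence by $O(T/2^k)$. Hence the modified algorithm still wins Game~B with probability at least $P_B - O(T/2^k)$, and from here I assume $\mathcal{A}_B$ has this form.

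Such an $\mathcal{A}_B$ only ever queries \textsf{ENTRANCE}, or a vertex reached from \textsf{ENTRANCE}'s two left-neighbours by descending the appended binary tree, or a right-of-\textsf{ENTRANCE} vertex it has already seen. Now define $\mathcal{A}_C$ to run $\mathcal{A}_B$ and intercept its queries: a query to \textsf{ENTRANCE} is answered from the data Game~C already supplies; a query into the left subtree is answered from a lazily constructed simulated copy of the depth-$2k$ binary tree, assigning fresh random labels to the children of each vertex the first time it is queried (so its distribution matches that of the real left subtree); and a query to a seen right-of-\textsf{ENTRANCE} vertex is forwarded to the Game~C oracle, which is legal by the rules of Game~C. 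The simulation departs from a genuine Game~B run only through a collision between a label $\mathcal{A}_C$ invents for the simulated left subtree and one of the $O(2^k)$ labels belonging to the real right part of the graph; over $T$ queries this has probability $O(T/2^{2k})$, negligible against our budget. Finally, when $\mathcal{A}_B$ queries \textsf{EXIT} and thereby wins Game~B, that label must---since $\mathcal{A}_B$ no longer queries fresh labels---have been returned earlier by the oracle as a neighbour of some right-of-\textsf{ENTRANCE} vertex $\mathcal{A}_B$ queried, a query $\mathcal{A}_C$ forwarded; so $\mathcal{A}_C$ has also seen \textsf{EXIT}'s label and may legally query it, winning Game~C. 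Combining the two losses gives $P_C \ge P_B - O(T/2^k)$, with the query count of $\mathcal{A}_C$ never exceeding that of $\mathcal{A}_B$.

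The step I expect to need the most care is the first one: verifying that the label distribution really makes a fresh query hit a real vertex with probability only $O(2^{-k})$ (equivalently, pinning down how many isolated vertices the model must contain), together with the accompanying check that the lazily simulated left subtree is genuinely identically distributed to the real one, so that rerouting left-side queries incurs no loss at all.
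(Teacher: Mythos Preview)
Your overall simulation strategy matches the paper's, but the first reduction step rests on a false premise. In the modified glued-trees construction there is \emph{no} large pool of isolated vertices: the graphs in $A\cup B$ consist of the glued-trees part ($\Theta(2^k)$ vertices), the appended depth-$2k$ binary tree ($\Theta(2^{2k})$ vertices), and three \textsf{MARKER}s; the label set $[n]$ has exactly this many elements. The ``much larger number of isolated vertices'' you invoke is a feature of the \emph{original} glued-trees problem, which the paper explicitly replaces here by the appended left tree. Consequently a fresh label almost surely names a real non-isolated vertex, and pretending the oracle returns ``isolated'' diverges from the true run with probability essentially~$1$, not $O(2^{-k})$.

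The fix is to merge your two cases: handle every Action-2 query (fresh or previously-simulated left-tree label) by the same left-tree simulation. When $\mathcal{A}_B$ queries a label $l$ that is not a seen right-of-\textsf{ENTRANCE} vertex, draw a random labelling of the depth-$2k$ binary tree consistent with all labels generated so far and answer from it. This is exactly what the paper does. The simulation then fails only if $l$ is in fact one of the $O(2^k)$ right-of-\textsf{ENTRANCE} vertices (the paper's event $E_2$), or if a genuine oracle answer later collides with a label you invented (the paper's event $E_1$); each has probability $O(2^{-k})$ per step because the right part has only $O(2^k)$ vertices out of $\Theta(2^{2k})$ total, and a union bound over $T$ steps gives the claimed $O(T/2^k)$ loss. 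Your lazy-tree machinery is already adequate for this---you just need to route fresh labels through it rather than declaring them isolated.
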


\begin{proof}
Suppose we are given an algorithm $\mathcal{A}_{B}$ that wins Game B with probability at least $P_B$ after $T= 2^{o(k)} \leq 2^k$ queries. We construct an algorithm $\mathcal{A}_C$ for Game C that simulates $\mathcal{A}_B$.

At each step, $\mathcal{A}_B$ can perform one of the following Actions:
\begin{enumerate}
    \item query by the label of a vertex to the right of \textsf{ENTRANCE} that was initially given or previously returned by the oracle, or
    \item query by the label of a vertex that does not satisfy the above condition.
\end{enumerate}

We let $\mathcal{A}_C$ simulate $\mathcal{A}_B$ by the following Actions:
\begin{enumerate}[label=\Roman*.]
    \item when $\mathcal{A}_B$ performs Action 1, $\mathcal{A}_C$ also performs Action 1, or
    \item when $\mathcal{A}_B$ performs Action 2, querying by a label $l$, $\mathcal{A}_C$ first randomly selects a labelling of a depth $2k$ binary tree that is consistent with the labels it has seen so far. Then $\mathcal{A}_C$ imagines that the oracle returns the labels of the neighbours of $l$ according to that labelling.
\end{enumerate}
Note that $\mathcal{A}_C$ does not actually query the oracle in Action II.

Action II can be ill-defined in the event ($E_2$) that the randomly selected labelling does not contain label $l$, in which case we assume that $\mathcal{A}_C$ loses. We also assume that $\mathcal{A}_C$ loses in the event ($E_1$) that, when performing Action I, either of the two returned labels had already appeared during some previous instance of Action II.

If neither event $E_1$ nor $E_2$ occurs at any step of the simulation $\mathcal{A}_C$, then $\mathcal{A}_C$ wins with probability at least $P_B$. But at each step of the simulation, we have
\begin{align}
    p(E_1) &= O \left(2\cdot \frac{5\cdot2^k}{2^{2k}-5\cdot 2^k}\right) = O(2^{-k})
\end{align}
since $p(E_1)$ is upper bounded by the probability that a random set $\mathcal{L}$, with at most $5\cdot 2^k$ labels, contains either of two particular labels within a set of at least $2^{2k}-5\cdot 2^k$ un-queried labels. (Note that the probability distribution of $\mathcal{L}$ is uniform over all subsets of the un-queried labels of size $|\mathcal{L}|$.) Similarly,
\begin{align}
    p(E_2) &= O\left(\frac{2(2^{k+1}-1)}{2^{2k}-5\cdot 2^k}\right) = O(2^{-k})
\end{align}
since $p(E_2)$ is upper bounded by the probability that label $l$ is among labels to the right of $\textsf{ENTRANCE}$.

Therefore, the probability that either $E_1$ or $E_2$ occurs at any step of the simulation is $O(T/2^k)$ by the union bound, and the lemma follows.
\end{proof}

Now, Game C is essentially the same as the following Game D:

\begin{game}[Game 2 of \cite{childs_2003}, STOC version]\label{game:gluedtrees}
Given an oracle that provides a black-box description of a glued trees graph, which upon querying by the label of a vertex, returns as output the labels of all adjacent vertices. Given also the label of the \textsf{ENTRANCE} of this glued trees graph. At each step, the algorithm can only query by the label of \textsf{ENTRANCE} or the label of a vertex that had been previously returned by the oracle. The algorithm wins as soon as  the oracle the label of \textsf{EXIT}.
\end{game}

\begin{Lemma}
Suppose there is an algorithm $\mathcal{A}_C$, using at most $T$ queries, that wins Game C for each graph in $A \cup B$ with probability at least $P_C$. Then there is an algorithm $\mathcal{A}_D$, using at most $T$ queries, that wins Game D for each glued-trees graph with probability at least $P_C$.
\end{Lemma}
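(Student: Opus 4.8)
The plan is to construct $\mathcal{A}_D$ so that it runs $\mathcal{A}_C$ as a subroutine on a ``virtual'' modified glued-trees graph that agrees with the given glued-trees graph $G$ wherever $\mathcal{A}_C$ is allowed to look. First I would have $\mathcal{A}_D$ spend its opening query on \textsf{ENTRANCE}, learning the labels $v_1,v_2$ of \textsf{ENTRANCE}'s two neighbours in $G$. It then starts $\mathcal{A}_C$ with the Game~C side information: the label of \textsf{ENTRANCE}; its ``four neighbours'' $v_1,v_2,w_1,w_2$, where $w_1,w_2$ are two fresh labels used nowhere else; and the fact that $v_1,v_2$ are the two neighbours in the direction of \textsf{EXIT}. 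Every query $\mathcal{A}_C$ issues is forwarded by $\mathcal{A}_D$ to its own oracle and the returned neighbour list is passed back; the moment $\mathcal{A}_C$ would query \textsf{EXIT}, $\mathcal{A}_D$ is already done. One should check (easily) that every label $\mathcal{A}_D$ ever queries is either \textsf{ENTRANCE} or a label previously returned to it, so $\mathcal{A}_D$ is a legal Game~D algorithm.

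The substantive step is to argue that the simulation is faithful, i.e., that $\mathcal{A}_C$ behaves exactly as it would on the virtual graph $G'\in A$ obtained from $G$ by attaching a depth-$2k$ binary tree to the left of \textsf{ENTRANCE} (with $w_1,w_2$ as the two children of \textsf{ENTRANCE}) and three isolated \textsf{MARKER}s; note $G$ is precisely the subgraph of $G'$ induced on the glued-trees vertices. Because the \textsf{MARKER}s of $G'$ are isolated, their labels never appear in any response; a \textsf{POINTER}-tree label appears only in response to querying \textsf{ENTRANCE} or another \textsf{POINTER}-tree vertex, and $\mathcal{A}_C$ may query neither since neither is ``to the right of \textsf{ENTRANCE}''; and $w_1,w_2$ are not neighbours of any vertex $\mathcal{A}_C$ can query, so they are never revisited. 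Hence, before it queries \textsf{EXIT}, $\mathcal{A}_C$ only ever queries glued-trees vertices other than \textsf{ENTRANCE}, and each such vertex has the same neighbour list in $G$ and in $G'$; so $\mathcal{A}_D$'s forwarded responses coincide with $\mathcal{A}_C$'s view on $G'$. Thus $\mathcal{A}_D$ reproduces a genuine run of $\mathcal{A}_C$ on a graph in $A$, so with probability at least $P_C$ the simulated $\mathcal{A}_C$ queries \textsf{EXIT} within $T$ steps; and since $\mathcal{A}_C$ can only query \textsf{EXIT} after \textsf{EXIT}'s label has appeared in an earlier response, $\mathcal{A}_D$ also sees that label, and so wins Game~D, with probability at least $P_C$.

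Finally I would close the query count. If the simulated $\mathcal{A}_C$ queries \textsf{EXIT} at its $\tau$-th query with $\tau\le T$, then \textsf{EXIT}'s label was revealed to it at some earlier query $\sigma\le\tau-1$ (when it queried a glued-trees neighbour of \textsf{EXIT}), so $\mathcal{A}_D$ sees \textsf{EXIT}'s label after at most $1+\sigma\le T$ of its own queries, the ``$1$'' being the opening query to \textsf{ENTRANCE}; hence $\mathcal{A}_D$ wins Game~D within $T$ queries. I expect the main obstacle to be exactly this combination of faithfulness and bookkeeping: one must show that $\mathcal{A}_C$, confined to the right of \textsf{ENTRANCE}, genuinely cannot tell $G$ from the surrounding modified glued-trees graph before reaching \textsf{EXIT}, so that the one unavoidable query to \textsf{ENTRANCE} is absorbed into the budget $T$ rather than inflating it to $T+1$.
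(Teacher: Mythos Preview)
Your approach is exactly the paper's: let $\mathcal{A}_D$ simulate $\mathcal{A}_C$, noting that every query $\mathcal{A}_C$ is permitted to make in Game~C is also a legal query in Game~D. The paper disposes of this in a single sentence, whereas you spell out the construction of the virtual graph $G'\in A$, the manufacture of the two fake left-neighbours $w_1,w_2$, and the verification that no query of $\mathcal{A}_C$ can touch the pointer tree or the markers; all of this is correct and in fact more careful than the paper.

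The one place your write-up overreaches is the final bookkeeping. You argue that $\mathcal{A}_D$ \emph{sees} \textsf{EXIT}'s label after at most $1+\sigma\le T$ queries and conclude that $\mathcal{A}_D$ \emph{wins} within $T$ queries. But in Game~D (reading the paper's typo consistently with Games~A--C) winning means \emph{querying} \textsf{EXIT}, and $\mathcal{A}_D$ has no way to recognise \textsf{EXIT}'s label among the returned neighbours without actually querying it; that happens only when you forward $\mathcal{A}_C$'s $\tau$-th query, i.e., at $\mathcal{A}_D$'s step $\tau+1\le T+1$. So the clever ``absorb the opening query'' trick does not go through as written. This off-by-one is harmless for the eventual $2^{\Omega(k)}$ lower bound, and the paper's own one-line proof glosses over the very same extra query to \textsf{ENTRANCE}; but if you want the lemma exactly as stated you should either weaken the conclusion to $T+1$ queries or adopt the alternative reading of Game~D's win condition (the oracle \emph{returns} \textsf{EXIT}'s label), under which your $1+\sigma\le T$ count is valid.
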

\begin{proof}
We may simply let an algorithm $\mathcal{A}_D$ simulate $\mathcal{A}_C$ because the queries made by $\mathcal{A}_C$ are all allowed in Game D, by the design of Game C. 
\end{proof}

Finally, we conclude that the original problem of deciding $\mathcal{P}_5$ requires exponentially many queries due to the exponential lower bound on Game~\ref{game:gluedtrees} \cite{childs_2003}. This establishes Theorem~\ref{thm:classical_lower}.

On the other hand, we show that this problem can be solved efficiently by a quantum computer.

\begin{Theorem}\label{thm:quantum_upper}
Graph property $\mathcal{P}_5$, under the promise $S=A\cup B$, has $Q(\mathcal{P}_5) = \poly(k)$.
\end{Theorem}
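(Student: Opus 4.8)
The plan is to reduce the problem to the glued-trees continuous-time quantum walk of \cite{childs_2003}, with the \textsf{POINTER}s playing a role analogous to that of \textsf{ENTRANCE} there. I would begin by recording the following structural facts about every graph in $S$: (i) \textsf{ENTRANCE} is the unique vertex of degree $4$ (all internal tree vertices and all glued-tree leaves have degree $3$, a \textsf{POINTER} has degree $1$, a \textsf{MARKER} has degree $0$ or $1$, and \textsf{EXIT} has degree $2$ if $x\in A$ and $5$ if $x\in B$); (ii) hence $\mathcal{P}_5(x)=1$ if and only if $x\in B$ if and only if the graph has a vertex of degree $5$ (necessarily \textsf{EXIT}); (iii) the $2^{2k}$ \textsf{POINTER}s are precisely the degree-$1$ vertices whose unique neighbour has degree $3$, and they constitute a constant fraction of the $\Theta(2^{2k})$ vertices of the graph. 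So it suffices to build a $\poly(k)$-query quantum routine that exhibits a degree-$5$ vertex with probability $\Omega(1/\poly(k))$ when $x\in B$ and never does so when $x\in A$; amplifying by $\poly(k)$ repetitions then gives the theorem.

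The first step is to prepare the uniform superposition over the \textsf{POINTER}s. Starting from $n^{-1/2}\sum_{v\in[n]}\ket v$, use $O(1)$ queries to attach to $\ket v$ its degree and, when that degree is $1$, the degree of its unique neighbour; then measure whether $v$ is a \textsf{POINTER}. By (iii) this succeeds with probability $\Omega(1)$, and since all \textsf{POINTER}s carry identical ancilla values the post-measurement state is \emph{exactly} the uniform superposition over \textsf{POINTER}s tensored with a fixed ancilla string; repeating $O(\log k)$ times drives the failure probability to $1/\poly(k)$. In terms of the decomposition of the vertex set into ``columns'' indexed by distance from \textsf{ENTRANCE}, this prepared state is the uniform state on the outermost column of the depth-$2k$ \textsf{POINTER} tree.

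Next I would run the continuous-time quantum walk generated by the adjacency matrix $A$ of the input graph. The key point is column-regularity: for each $j$, every vertex at distance $j$ from \textsf{ENTRANCE} has the same number of neighbours at distance $j-1$ and the same number at distance $j+1$ --- this holds for the \textsf{POINTER} tree, for each depth-$k$ tree of the glued-trees gadget, for the gluing cycle (each leaf meets exactly two leaves of the opposite tree, regardless of the cycle chosen), and for the three \textsf{MARKER}s (each meeting only \textsf{EXIT}). Consequently $A$ preserves the span of the column-uniform states, on which it acts as a weighted path of length $4k+O(1)$ all of whose weights are $\Theta(1)$, with a single $\Theta(1)$-strength defect at the glued leaves (and $O(1)$ atypical weights near \textsf{ENTRANCE}, and near \textsf{EXIT} when $x\in B$). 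Since $A$ is $5$-sparse and $\log n = O(k)$, simulating $e^{-\mathrm{i}At}$ to precision $1/\poly(k)$ for a time $t$ drawn uniformly from $[0,T]$ with $T=\poly(k)$ costs $\poly(k)$ queries by standard sparse-Hamiltonian-simulation techniques. The propagation analysis of \cite{childs_2003} then applies essentially unchanged to this weighted path: the wave packet launched from one endpoint travels with $\Theta(1)$ group velocity, an $\Omega(1)$ fraction of it is transmitted through the lone bounded-strength defect, and upon reflecting at the \textsf{EXIT} endpoint it deposits weight $\Omega(1/\poly(k))$ on the single site of column $2k+1$, namely \textsf{EXIT}. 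So for a suitable $t\in[0,T]$, a measurement in the vertex basis returns \textsf{EXIT} --- or, when $x\in B$, possibly a \textsf{MARKER} adjacent to \textsf{EXIT} --- with probability $\Omega(1/\poly(k))$.

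Finally, having measured a vertex $v$, query $v$ together with its (at most $5$) neighbours and declare ``$\mathcal{P}_5(x)=1$'' if any of these vertices has degree $5$. Wrapping the entire procedure in $\poly(k)$ independent repetitions and outputting $1$ iff a degree-$5$ vertex is ever observed yields a bounded-error algorithm: when $x\in A$ no degree-$5$ vertex exists, so the output is $0$ with certainty, and when $x\in B$ each repetition observes \textsf{EXIT} (directly, or as the neighbour of a measured \textsf{MARKER}) with probability $\Omega(1/\poly(k))$, so $\poly(k)$ repetitions suffice. The total query count is $\poly(k)$, establishing Theorem~\ref{thm:quantum_upper}. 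I expect the main obstacle to be the scattering estimate quoted from \cite{childs_2003}: one must check that column-regularity genuinely survives the appended depth-$2k$ tree and the \textsf{MARKER} edges, and that launching the walk from an endpoint (the \textsf{POINTER}s) rather than from \textsf{ENTRANCE} still produces $\Omega(1/\poly(k))$ overlap with \textsf{EXIT} after the packet crosses the single defect; this is the same one-dimensional scattering argument as in \cite{childs_2003}, only on a somewhat longer path. The remaining ingredients --- state preparation, Hamiltonian simulation, and amplification --- are routine.
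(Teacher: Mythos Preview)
Your proposal is correct but follows a genuinely different route from the paper. The paper's proof is a two–stage hybrid: Stage~1 is entirely classical (sample vertices until a \textsf{POINTER} is hit, then perform a classical non-backtracking walk to locate \textsf{ENTRANCE} in $O(k^2)$ queries, and probe the three remaining directions out of \textsf{ENTRANCE} for $2k$ steps to identify which two lead toward \textsf{EXIT}); Stage~2 then invokes the glued-trees quantum walk of \cite{childs_2003} as a black box on the glued-trees subgraph, starting from \textsf{ENTRANCE}. By contrast, you never find \textsf{ENTRANCE} at all: you prepare the uniform superposition over the \textsf{POINTER} column and run the continuous-time walk generated by the adjacency matrix of the \emph{entire} input graph, arguing that column regularity extends through the appended depth-$2k$ tree and the \textsf{MARKER} edges so that the walk still collapses to a $\Theta(1)$-weighted path, now of length $\sim 4k$ with the glued-leaf defect sitting at position $\sim 3k$ rather than at the midpoint.

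Both arguments are valid. The paper's version has the advantage of being a clean reduction: once \textsf{ENTRANCE} and the correct direction are known, nothing beyond \cite{childs_2003} needs to be reproved. Your version is more uniform and avoids the $O(k^2)$ classical walk and the direction-finding step, but it does require re-verifying two points you correctly flag: that the column subspace is invariant for the full graph (it is; in fact the weight is $\sqrt{2}$ on every edge of the reduced path except the single weight-$2$ hop at the gluing and, when $x\in B$, a weight-$\sqrt{3}$ hop to the \textsf{MARKER} column --- so there is actually no atypical weight at \textsf{ENTRANCE}), and that the one-dimensional scattering estimate from \cite{childs_2003} still yields $\Omega(1/\poly(k))$ overlap with \textsf{EXIT} when the path is roughly twice as long and the defect is off-center. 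The latter is indeed the same argument as in \cite{childs_2003}, but ``essentially unchanged'' is slightly optimistic; you should expect to redo the spectral/transmission computation for this specific path rather than cite it verbatim.
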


\begin{proof}
We describe a two-stage quantum algorithm that estimates $\Prop_5$ using $\poly(k)$ queries.

\begin{enumerate}[labelwidth=4.5em,leftmargin=!]
    \item[Stage 1.] Keep querying a vertex uniformly at random until we hit a \textsf{POINTER} (distinguished by having degree $1$). This occurs with high probability after a constant number of queries because the probability of querying a \textsf{POINTER} is
\begin{equation}
    \frac{2^{2k}}{2\cdot2^{2k}+2(2^{k+1}-2)}\approx \frac{1}{2}
\end{equation}
for large $k$. Then, we perform a classical random walk from the \textsf{POINTER} (never walking backwards to the previously queried vertex). If after $k$ steps we reach another \textsf{POINTER}, we know that we made the wrong turn at the $(k/2)$-th step ($k$ must be even). So we simply proceed by taking the correct turn at the  $(k/2)$-th step. Continuing similarly, we can reach \textsf{ENTRANCE} (distinguished by having degree $4$) after $O(k^2)$ queries~\cite{childs_2003}. Upon reaching \textsf{ENTRANCE} from one direction, we walk $2k$ steps in each of the three other directions (again never walking backwards). One of these directions will lead to another \textsf{POINTER} and we can eliminate that direction as a direction to reach $\textsf{EXIT}$. This concludes the first stage.
    \item[Stage 2.] After the first stage, the problem becomes essentially the same as the original glued trees problem. Therefore, in the second stage, we run the quantum walk algorithm of \cite{childs_2003}.
    This algorithm is able to query \textsf{EXIT} with high probability after $\poly(k)$ queries. Note that \textsf{EXIT} is distinguished by having degree $2$ or $5$ when the input graph is in set $A$ or $B$, respectively.
\end{enumerate}
Since this algorithm finds the \textsf{EXIT} using only $\poly(k)$ queries, the result follows.
\end{proof}

\section{Discussion}\label{sec:discussion}

We have shown that graph properties do not admit an exponential quantum query speedup in the adjacency matrix model, but that there is a graph property with exponential quantum speedup in the adjacency list model. 

We emphasize that this work leaves open the question of whether there can be an exponential separation for graph \textit{property testing} in the adjacency list model. In property testing, the set of ``no'' inputs must include all those that are $\epsilon$-far away from the ``yes'' inputs. Our example in Section~\ref{sec:adjlist} does not satisfy this condition.

Another question is whether the framework of \cite{chailloux_symmetric_18} can be used to address $\mathcal{P}$ with other types of symmetries, or more precisely automorphism groups
\begin{equation}
    \text{Aut}(\mathcal{P})\coloneqq \{\sigma\in S_m \ | \ \mathcal{P}(x) = \mathcal{P}(x\circ\sigma) \text{ for all } x\in S\}.
\end{equation}

Note: as we were completing this manuscript, we learned of \cite{shalev}, which proves a similar result to that of Sec.~\ref{sec:adjmat} also via \cite{chailloux_symmetric_18}.

\section*{Acknowledgments}
We thank Carl Miller for many helpful discussions.
We acknowledge support from the Army Research Office (grant W911NF-20-1-0015); the Department of Energy, Office of Science, Office of Advanced Scientific Computing Research, Quantum Algorithms Teams and Accelerated Research in Quantum Computing programs; and the National Science Foundation (grant CCF-1813814).

\newpage
\bibliographystyle{unsrt}
\bibliography{gp_references.bib}

\end{document}